\theoremstyle{plain}
\newtheorem{theorem}{Theorem}[section]
\theoremstyle{definition}
\numberwithin{equation}{section}
\newcommand{\R}{\mathbb{R}}
\newcommand{\E}{\mathbb{E}}
\newcommand{\blind}{0}
\begin{document}

\def\spacingset#1{\renewcommand{\baselinestretch}%
{#1}\small\normalsize} \spacingset{1}


\if0\blind
{
  \title{\bf Fast and Accurate Approximation of the Full Conditional for Gamma Shape Parameters}
  \author{Jeffrey W. Miller\thanks{Contact: \url{jwmiller@hsph.harvard.edu}. Department of Biostatistics, Harvard School of Public Health, 655 Huntington Ave 1-419, Boston MA, 02115.}\\
    Harvard University, Department of Biostatistics}
  \maketitle
} \fi

\if1\blind
{
  \bigskip
  \bigskip
  \bigskip
  \begin{center}
    {\LARGE\bf Fast and Accurate Approximation of the Full Conditional for Gamma Shape Parameters}
\end{center}
  \medskip
} \fi

\bigskip
\begin{abstract}
The gamma distribution arises frequently in Bayesian models, but there is not an easy-to-use conjugate prior for the shape parameter of a gamma. This inconvenience is usually dealt with by using either Metropolis--Hastings moves, rejection sampling methods, or numerical integration. However, in models with a large number of shape parameters, these existing methods are slower or more complicated than one would like, making them burdensome in practice. It turns out that the full conditional distribution of the gamma shape parameter is well approximated by a gamma distribution, even for small sample sizes, when the prior on the shape parameter is also a gamma distribution. This article introduces a quick and easy algorithm for finding a gamma distribution that approximates the full conditional distribution of the shape parameter. We empirically demonstrate the speed and accuracy of the approximation across a wide range of conditions. If exactness is required, the approximation can be used as a proposal distribution for Metropolis--Hastings. 
\end{abstract}

\noindent%
{\it Keywords:}  Bayesian, Generalized Newton, Hierarchical models, Markov chain Monte Carlo, Sampling.
\vfill

\newpage
\spacingset{1.45} 

\section{Introduction}

The lack of a nice conjugate prior for the shape parameter of the gamma distribution is a commonly occurring nuisance in many Bayesian models. Conjugate priors for the shape parameter do exist, but unfortunately, they are not analytically tractable \citep{damsleth1975conjugate,miller1980bayesian}. In Markov chain Monte Carlo (MCMC) algorithms, one can easily use Metropolis--Hastings (MH) moves to update the shape parameter, but mixing can be slow if the proposal distribution is not well-calibrated. Rejection sampling schemes can also be used for sampling the shape parameter \citep{son2006bayesian,pradhan2011bayes}, however, the more efficient schemes such as adaptive rejection sampling \citep{gilks1992adaptive} are complicated and require log-concavity, which does not always hold under some important choices of prior. Modern applications involve models with a large number of parameters, necessitating fast inference algorithms that work very generally with minimal tuning. Indeed, our motivation for developing the approach in this article is a gene expression model involving tens of thousands of gamma shape parameters.

In this article, we introduce a fast and simple algorithm for finding a gamma distribution that approximates the full conditional distribution of the gamma shape parameter, when the prior on the shape parameter is also a gamma distribution.
This algorithm can be used to perform approximate Gibbs updates by sampling from the approximating gamma distribution.
Alternatively, the approximation can be used as an MH proposal distribution to make a move that exactly preserves the full conditional and has high acceptance rate.

The basic idea of the algorithm is to approximate the full conditional density $f$ by a gamma density $g$ chosen such that the first and second derivatives of $\log g$ match those of $\log f$ at a point near the mean of $f$.  Since the mean of $f$ is not known in closed form, the approximation is iteratively refined by matching derivatives at the mean of the current $g$.

The article is organized as follows. Section~\ref{section:algorithm} describes the algorithm, and Section~\ref{section:results} contains an empirical assessment of the accuracy and speed of the algorithm. Section~\ref{section:previous} discusses previous work, and Section~\ref{section:derivation} provides mathematical details on the derivation of the algorithm. The supplementary material contains a characterization of the fixed points of the algorithm, additional derivations, and additional empirical results.

\section{Algorithm}
\label{section:algorithm}

Consider the model $X_1,\ldots,X_n|a,\mu \sim \mathrm{Gamma}(\mathrm{shape}=a,\,\mathrm{rate}=a/\mu)$ i.i.d., where $a,\mu>0$, and note that this makes $\mu = \E(X_i\mid a,\mu)$. This parametrization is convenient, since $a$ controls the concentration and $\mu$ controls the mean. 
We assume a gamma prior on the shape, say, $a \sim \mathrm{Gamma}(\mathrm{shape}=a_0,\, \mathrm{rate}=b_0)$ where, if desired, $a_0$ and $b_0$ can depend on~$\mu$.
The following algorithm produces $A$ and $B$ such that
$$p(a\mid x_1,\ldots,x_n,\mu,a_0,b_0)\approx \mathrm{Gamma}(a\mid\mathrm{shape}=A,\,\mathrm{rate}=B).$$
We use $\psi(x)$ and $\psi'(x)$ to denote the digamma and trigamma functions, respectively, and $\log(x)$ denotes the natural log.

\begin{algorithm}
\SetKwInOut{Input}{input}\SetKwInOut{Output}{output}
\DontPrintSemicolon
\Input{data $x_1,\ldots,x_n > 0$, parameters $\mu,a_0,b_0 > 0$, tolerance $\epsilon > 0$, and maximum number of iterations $M$.}
\Output{$A$ and $B$.}
\Begin{
    $R \leftarrow \sum_{i = 1}^n \log(x_i)$, $S \leftarrow \sum_{i = 1}^n x_i$, and $T \leftarrow S/\mu - R + n\log(\mu) - n$\;
    $A \leftarrow a_0 + n/2$ and $B \leftarrow b_0 + T$\;
    \For{$j = 1,\ldots,M$}{
        $a \leftarrow A/B$\;
        $A \leftarrow a_0 - n a + n a^2 \psi'(a)$\;
        $B \leftarrow b_0 + (A-a_0)/a - n\log(a) + n\psi(a) + T  $\;
        \lIf{$|a/(A/B) - 1| < \epsilon$}{\KwRet{$A,B$}}
    }
    \KwRet{$A,B$}
}
\caption{Approximating the full conditional of the shape parameter\label{algorithm:main}}
\end{algorithm}

We recommend setting $\epsilon = 10^{-8}$ and $M = 10$.  Using $\epsilon = 10^{-8}$, the algorithm terminates in four iterations or less in all of the simulations we have done, so $M = 10$ is conservative.  
Most languages have routines for the digamma and trigamma functions, which are defined by $\psi(x) = \frac{\partial}{\partial x}\log\Gamma(x)$ and $\psi'(x) = \frac{\partial}{\partial x}\psi(x)$, where $\Gamma(x)$ is the gamma function.
See Section~\ref{section:derivation} for the derivation of the algorithm. 
See the supplement for a fixed point analysis.

To perform an approximate Gibbs update to $a$ in an MCMC sampler, run Algorithm~\ref{algorithm:main} to obtain $A,B$ using the current values of $x_{1:n},\mu,a_0,b_0$, and then sample $a\sim\textrm{Gamma}(A,B)$. (For brevity, in the rest of the paper, we write $x_{1:n}$ to denote $(x_1,\ldots,x_n)$, and we write $\textrm{Gamma}(\alpha,\beta)$ to denote the gamma distribution with shape $\alpha$ and rate $\beta$.)

Alternatively, to perform a Metropolis--Hastings move to update $a$, one can run Algorithm~\ref{algorithm:main} to obtain $A,B$ using the current values of $x_{1:n},\mu,a_0,b_0$, then sample a proposal $a'\sim\textrm{Gamma}(A,B)$, and accept with probability 
$$\min\Big\{1,\; \frac{p(a'\mid x_{1:n},\mu,a_0,b_0) \textrm{Gamma}(a|A,B) }{p(a\mid x_{1:n},\mu,a_0,b_0) \textrm{Gamma}(a'|A,B)}\Big \},$$
where $a$ is the current value of the shape parameter. Note that the normalization constant of $p(a\mid x_{1:n},\mu,a_0,b_0)$ does not need to be known since it cancels.

To infer the mean $\mu$, note that the inverse-gamma family is a conditionally conjugate prior for $\mu$ given $a$. Therefore, Gibbs updates to $\mu$ can easily be performed when using an inverse-gamma prior on $\mu$, independently of $a$, given $a_0$ and $b_0$.

\section{Performance assessment}
\label{section:results}

To assess the performance of Algorithm~\ref{algorithm:main}, we evaluated its accuracy in approximating the true full conditional distribution, and its speed in terms of the number of iterations until convergence, across a wide range of simulated conditions.

Specifically, for each combination of $n \in \{1,10,100\}$, $r \in \{0.5, 1, 2 \}$, $a_\mathrm{true} \in \{10^{-6},10^{-5},\ldots,10^5,10^6 \}$, $\mu_\mathrm{true} \in \{10^{-6},10^{-5},\ldots,10^5,10^6 \}$, and $a_0 \in \{1, 0.1, 0.01\}$, we generated five independent data sets $x_1,\ldots,x_n \sim \mathrm{Gamma}(a_\mathrm{true},a_\mathrm{true}/\mu_\mathrm{true})$ i.i.d., ran Algorithm~\ref{algorithm:main} with inputs $x_{1:n}$, $\mu = r \mu_\mathrm{true}$, $a_0$, and $b_0=a_0$, and then compared the resulting $\mathrm{Gamma}(a|A,B)$ approximation to the true full conditional distribution $p(a \mid x_{1:n},\mu,a_0,b_0)$.  For the prior parameters, we set $b_0 = a_0$ in each case so that the prior mean is $1$.  In each case, we used $\epsilon = 10^{-8}$ for the convergence tolerance, and $M = 10$ for the maximum number of iterations. 

\subsection{Approximation accuracy}

First, to get a visual sense of the closeness of the approximation, Figure~\ref{figure:CDFs} shows the cumulative distribution functions (CDFs) of the true and approximate full conditional distributions for a single simulated data set for each case with $r = 1$, $a_\mathrm{true} = 1$, and $\mu_\mathrm{true} = 1$.  In each case, the approximation is close to the true distribution, and in most cases the approximation is visually indistinguishable from the truth.  These plots are representative of the full range of cases considered.

\begin{figure}
  \centering
  \includegraphics[trim=.3cm 0 1.2cm 0, clip, width=0.325\textwidth]{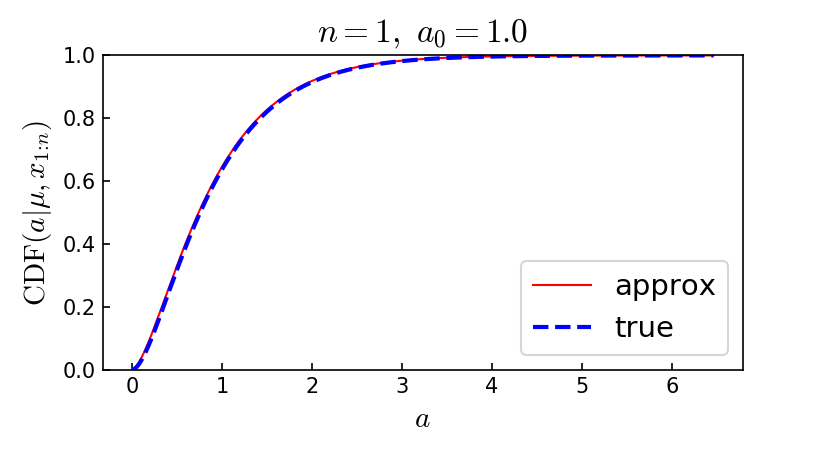}
  \includegraphics[trim=.3cm 0 1.2cm 0, clip, width=0.325\textwidth]{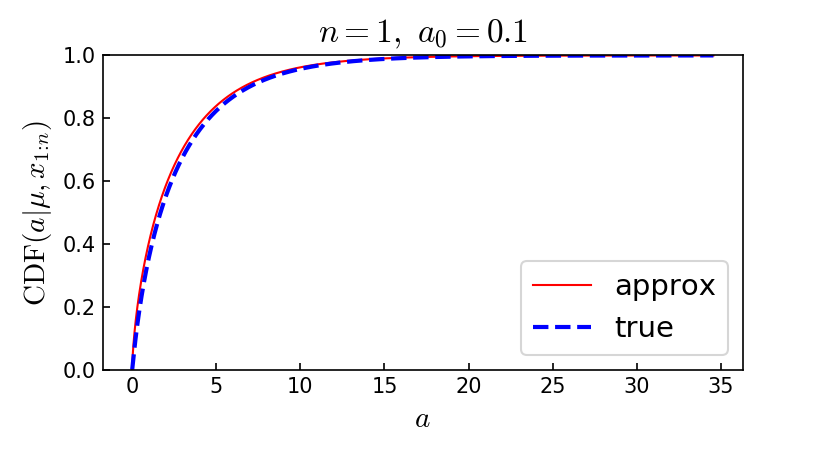}
  \includegraphics[trim=.3cm 0 1.2cm 0, clip, width=0.325\textwidth]{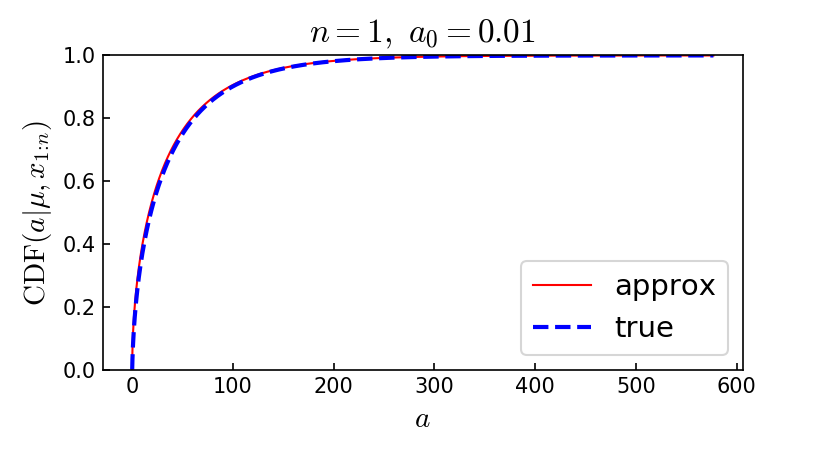}
  \includegraphics[trim=.3cm 0 1.2cm 0, clip, width=0.325\textwidth]{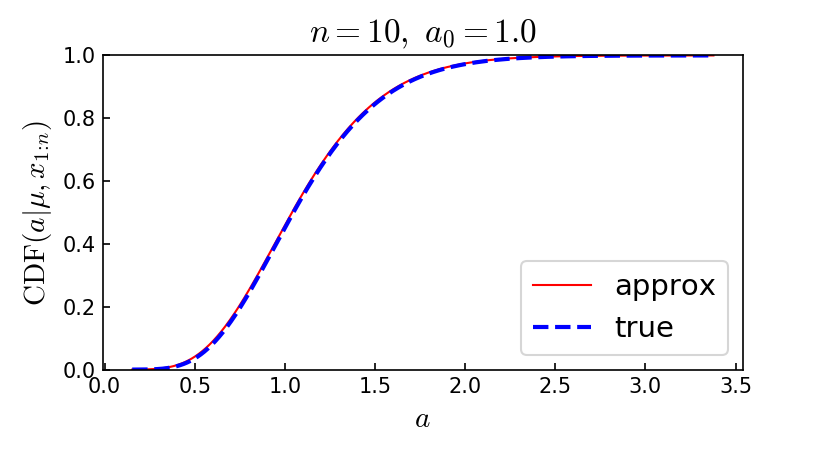}
  \includegraphics[trim=.3cm 0 1.2cm 0, clip, width=0.325\textwidth]{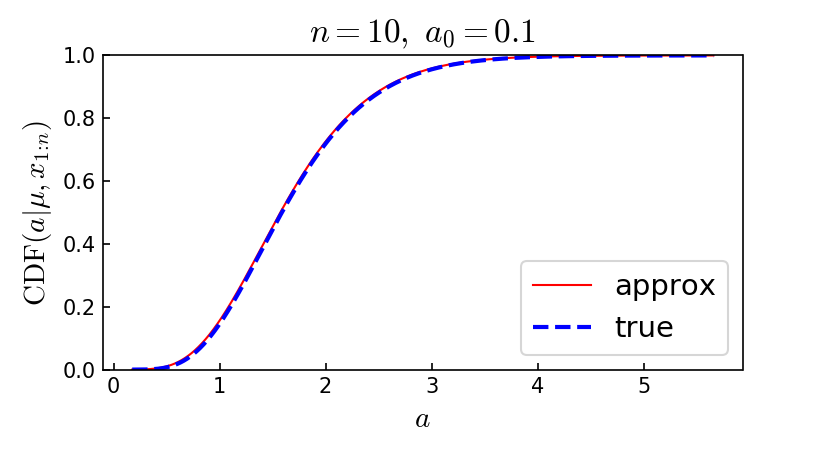}
  \includegraphics[trim=.3cm 0 1.2cm 0, clip, width=0.325\textwidth]{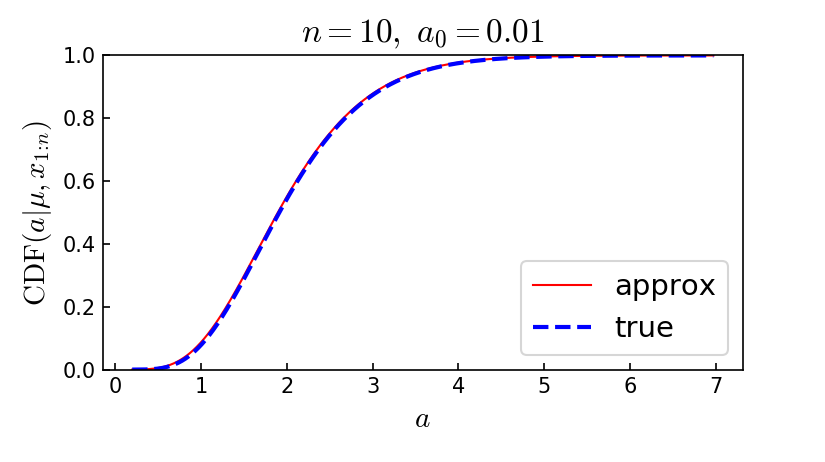}
  \includegraphics[trim=.3cm 0 1.2cm 0, clip, width=0.325\textwidth]{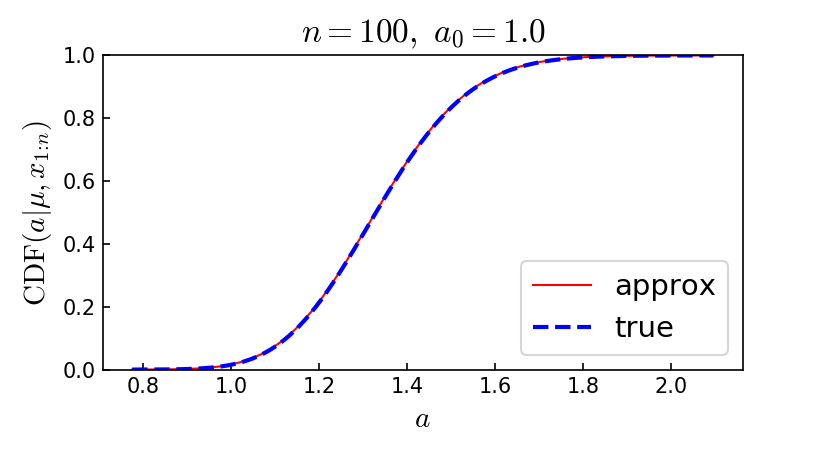}
  \includegraphics[trim=.3cm 0 1.2cm 0, clip, width=0.325\textwidth]{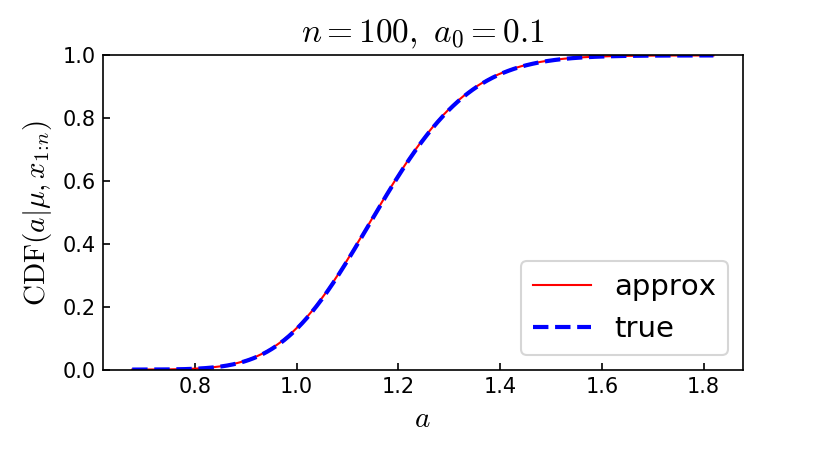}
  \includegraphics[trim=.3cm 0 1.2cm 0, clip, width=0.325\textwidth]{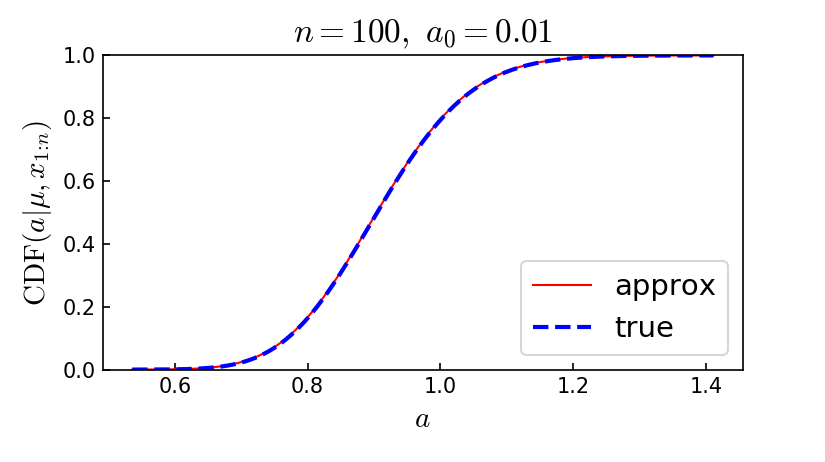}
  \caption{Cumulative distribution function (CDF) of the true full conditional $p(a\mid x_{1:n},\mu,a_0,b_0)$ and approximate full conditional, on simulated data $x_1,\ldots,x_n \sim \mathrm{Gamma}(1,1)$, when conditioning on $\mu = 1$ and the prior is $a\sim\mathrm{Gamma}(a_0,a_0)$.}
  \label{figure:CDFs}
\end{figure}

\begin{figure}
  \centering
  \includegraphics[trim=.5cm 0 1cm 0, clip, width=0.325\textwidth]{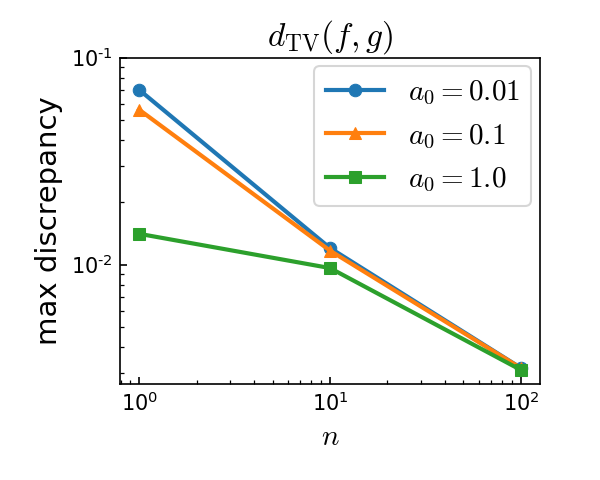}
  \includegraphics[trim=.5cm 0 1cm 0, clip, width=0.325\textwidth]{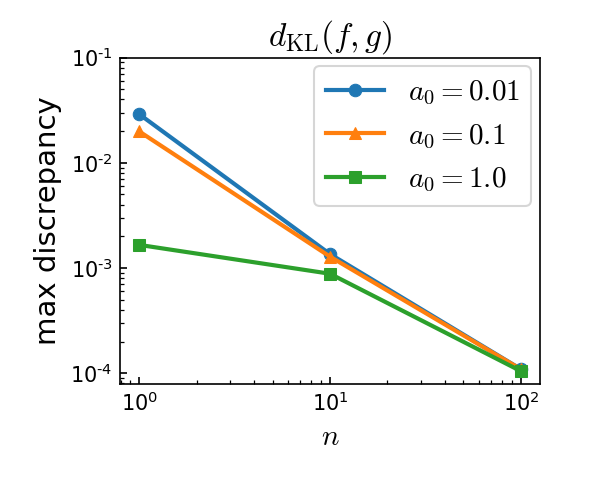}
  \includegraphics[trim=.5cm 0 1cm 0, clip, width=0.325\textwidth]{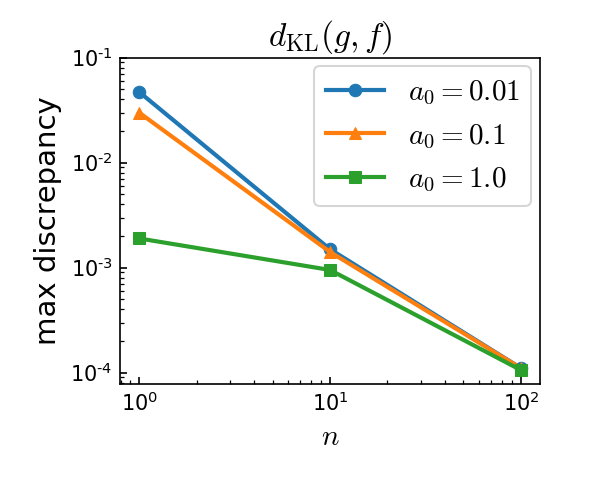}
  \caption{Largest observed discrepancy between the true full conditional $f$ and the approximate full conditional $g$, for the total variation distance $d_\mathrm{TV}(f,g)$ and the Kullback--Leibler divergences $d_\mathrm{KL}(f,g)$ and $d_\mathrm{KL}(g,f)$, for each $a_0$ and $n$.}
  \label{figure:worst}
\end{figure}

To quantify the discrepancy between the truth and the approximation, for each simulation run we computed:
\begin{enumerate}
\item the total variation distance, $d_\mathrm{TV}(f,g) = \frac{1}{2}\int |f(a)-g(a)|d a$,
\item the Kullback--Leibler divergence, $d_\mathrm{KL}(f,g) = \int f(a)\log(f(a)/g(a)) d a$, and
\item the reverse Kullback--Leibler divergence, $d_\mathrm{KL}(g,f) = \int g(a)\log(g(a)/f(a)) d a$, 
\end{enumerate}
where $f(a) = p(a \mid x_{1:n},\mu,a_0,b_0)$ is the true density and $g(a) = \mathrm{Gamma}(a|A,B)$ is the approximate density.  Numerical integration was performed to compute these integrals in order to compare with the truth; see the supplementary material for details.

Figure~\ref{figure:worst} shows the largest observed value (that is, the worst-case value) of $d_\mathrm{TV}(f,g)$, $d_\mathrm{KL}(f,g)$, and $d_\mathrm{KL}(g,f)$ across all cases for each $a_0$ and $n$, using the average discrepancy over the five data sets for each case.  This shows that the approximation is quite good across a very wide range of cases.  The approximation accuracy improves as $n$ increases, and the accuracy is decent even when $n$ is small. 
The accuracy improves as $a_0$ increases, presumably because this makes the prior stronger and the prior is itself a gamma distribution, so it makes sense that the fit of a gamma approximation would improve.

For a more fine-grained breakdown of the discrepancies, case-by-case, see the supplementary material for heatmaps showing
$d_\mathrm{TV}(f,g)$, $d_\mathrm{KL}(f,g)$, and $d_\mathrm{KL}(g,f)$ across a range of cases.
The accuracy tends to be worst in the cases with $a_\mathrm{true}=1$, except that sometimes when $n=1$ it is worst when $a_\mathrm{true}$ is larger.
The accuracy tends to level off as $a_\mathrm{true}$ grows larger than $100$ or smaller than $0.01$.
Across all cases, the approximation is particularly good when $a_\mathrm{true}$ is $0.01$ or smaller, or when $n$ is large and $\mu=\mu_\mathrm{true}$.

\subsection{Speed of convergence}

To evaluate the speed of Algorithm~\ref{algorithm:main} in terms of the number of iterations required, Table~\ref{table:iterations} shows the number of runs in which the algorithm required $k$ iterations to reach the termination condition, for each $k$. All runs terminated in four iterations or less. (For each $a_0$, these results are over all of the 1521 combinations of $n$, $r$, $a_\mathrm{true}$, $\mu_\mathrm{true}$, times five replicates each, as described above.) The algorithm terminates very rapidly in all cases considered.

\begin{table}  
\centering
\caption{Number of runs that required $k$ iterations before terminating.}%
\label{table:iterations}%
\begin{tabular}{|c|c|c|c|}%
\hline
$k$ & $a_0 = 1$ & $a_0 = 0.1$ & $a_0 = 0.01$ \\
\hline
1 iteration & 0 & 0 & 0 \\
2 iterations & 0 & 318 & 631 \\
3 iterations & 5751 & 4699 & 4308 \\
4 iterations & 1854 & 2588 & 2666 \\
$\geq$ 5 iterations & 0 & 0 & 0 \\
\hline
\end{tabular}
\end{table}




\section{Previous work}
\label{section:previous}

The original inspiration for Algorithm~\ref{algorithm:main} was the generalized Newton algorithm of \citet{minka2002estimating} for maximum likelihood estimation of gamma parameters. That said, our algorithm is significantly different since we are considering the full conditional distribution of the shape, rather than finding the unconditional MLE of the shape. Further, we iteratively set $a$ to the mean rather than the mode of the approximating density. This is an important difference since the mode can be at $0$ when $a_0 < 1$, but the derivatives blow up at $0$, making it incoherent to match the derivatives at $0$. Meanwhile, our algorithm handles $a_0 < 1$ without difficulties. Also, empirical evidence suggests that even when the mode is valid, using the mean tends to yield a more accurate approximation.

\citet{damsleth1975conjugate} found conjugate priors for the shape $a$ in the case of known rate $b$, as well as the case of unknown $b$. However, it seems that the distributions in these conjugate families are not easy to work with, analytically. Consequently, \citet{damsleth1975conjugate} resorts to numerical integration over $a$, for each value of $b$, in order to obtain $p(b\mid x_{1:n}) = \int p(a,b\mid x_{1:n})\,d a$, the posterior density of $b$.
Similarly, \citet{miller1980bayesian} numerically integrates over $a$ to obtain the posterior of $b$ (or of the mean $\mu$), but instead of doing so for each value of $b$, he numerically computes the mean, variance, skewness, and kurtosis of $p(b\mid x_{1:n})$, and then constructs an approximation to $p(b\mid x_{1:n})$ by using the method of moments to fit a member of the Pearson family to it.  \citet{miller1980bayesian} uses improper priors as well as the conjugate priors of \citet{damsleth1975conjugate}.
These numerical integration approaches would be fine for a small number of shape parameters with fixed data $x_{1:n}$ and fixed hyperparameters $a_0,b_0$. However, modern applications involve hierarchical models with many shape parameters, with latent $x$'s and unknown hyperparameters. In such cases, the numerical integration approaches of \citet{damsleth1975conjugate} and \citet{miller1980bayesian} would be very computationally intensive.

\citet{son2006bayesian} propose to use the adaptive rejection sampling (ARS) method of \citet{gilks1992adaptive} to sample from the full conditional for the shape $a$ in a Gibbs sampler for $a$ and $b$. There are a few disadvantages to this approach. ARS is quite a bit more complicated to implement than our procedure, and requires additional bookkeeping that is burdensome when dealing with a large number of shape parameters. Further, ARS requires the target distribution to be log-concave, but log-concavity of the full conditional of $a$ is not always guaranteed. For instance, when using a $\mathrm{Gamma}(a_0,b_0)$ prior on $a$, if $0 < a_0 < 1$ then the prior is not log-concave, which can cause the full conditional given $\mu,x_{1:n}$ to fail to be log-concave when $a_0$ is sufficiently small and $n=1$. In practice, it is important to allow $a_0$ to be small in order to obtain a less informative prior while holding the prior mean $a_0/b_0$ fixed.
Rejection sampling (RS) approaches have also been used by other authors, for instance, \citet{pradhan2011bayes} use the log-concave RS method of \citet{devroye1984simple}, and \citet{tsionas2001exact} uses RS with a calibrated Gaussian proposal.


\section{Derivation of the algorithm}
\label{section:derivation}

Let $f(a) = p(a \mid x_{1:n},\mu,a_0,b_0)$ denote the true full conditional density, and let $g(a) = \mathrm{Gamma}(a|A,B)$ for some $A,B$ to be determined.
The basic idea of the algorithm is to choose $A$ and $B$ so that the first and second derivatives of $\log g$ match those of $\log f$ at a point $a$ near the mean of $f$.  To find a point near the mean of $f$, the algorithm is initialized with a choice of $A$ and $B$ based on an approximation of the gamma function (see the supplementary material), and then iteratively, $A$ and $B$ are refined by matching the first and second derivatives of $\log f$ at the mean $A/B$ from the previous iteration.

It might seem like a better idea to use a point near the mode of $f$ rather than the mean, by analogy with the Laplace approximation. However, on this problem, empirical evidence suggests that using the mean tends to provide better accuracy than the mode. Using the mean also simplifies the algorithm since the mode of $g$ is sometimes at $0$, but the derivatives of $\log f$ and $\log g$ blow up at $0$.
To interpret the limiting behavior of the algorithm, we characterize its fixed points in the supplementary material.

\subsection{Matching derivatives}

First, to find the formulas for matching the first and second derivatives of $\log g$ to those of $\log f$ at some point $a>0$, note that
\begin{align*}
\log g(a) &= A\log(B) - \log \Gamma(A) + (A-1)\log(a) - B a \\
\frac{\partial}{\partial a}\log g(a) &= \frac{A-1}{a} - B \\
\frac{\partial^2}{\partial a^2}\log g(a) &= -\frac{A-1}{a^2}.
\end{align*}
Meanwhile, $f(a) \propto p(x_{1:n} \mid a,\mu,a_0,b_0)\, p(a\mid\mu,a_0,b_0)$ and
\begin{align}\label{eqn:f}
p(x_{1:n} \mid a,\mu,a_0,b_0) &= \prod_{i = 1}^n \frac{(a/\mu)^a}{\Gamma(a)} x_i^{a-1} \exp\!\big(-(a/\mu)x_i\big) \notag \\
&= \frac{(a/\mu)^{n a}}{\Gamma(a)^n} \exp\!\big((a-1)R\big) \exp\!\big(-(a/\mu)S\big) \notag\\
&= \frac{a^{n a}}{\Gamma(a)^n} \exp\!\big(-R - (T+n)a \big) 
\end{align}
where $R =  \sum_{i=1}^n \log(x_i)$, $S = \sum_{i=1}^n x_i$, and $T = S/\mu - R + n\log(\mu) - n$.
Thus,
\begin{align}\label{equation:derivatives}
\log f(a) &= \mathrm{const} +  \log p(x_{1:n} \mid a,\mu,a_0,b_0) + \log p(a\mid \mu,a_0,b_0) \notag\\
&= \mathrm{const} + n a \log(a) - n\log\Gamma(a) - (T+n)a + (a_0-1)\log(a) - b_0 a \notag\\
\frac{\partial}{\partial a}\log f(a) &= n\log(a) + n - n \psi(a) - (T+n) + \frac{a_0-1}{a} - b_0 \\
\frac{\partial^2}{\partial a^2}\log f(a) &= n/a - n\psi'(a) - \frac{a_0-1}{a^2} \notag
\end{align}
since $\psi(a) = \frac{\partial}{\partial a}\log\Gamma(a)$ and $\psi'(a) = \frac{\partial^2}{\partial a^2}\log\Gamma(a)$.
Setting $\frac{\partial^2}{\partial a^2}\log g(a) = \frac{\partial^2}{\partial a^2}\log f(a)$ yields
\begin{equation}\label{eqn:A}
A = a_0 - n a + n a^2 \psi'(a),
\end{equation}
and setting $\frac{\partial}{\partial a}\log g(a) = \frac{\partial}{\partial a}\log f(a)$ yields
\begin{equation}\label{eqn:B}
B = b_0 + \frac{A-a_0}{a} - n\log(a) + n\psi(a) + T.
\end{equation}
The updates to $A$ and $B$ in Algorithm~\ref{algorithm:main} are defined by Equations~\ref{eqn:A} and \ref{eqn:B}.

\subsection{Updating the matching point $a$}

Given $A$ and $B$ such that $\mathrm{Gamma}(A,B)$ approximates $f$, we can choose $a$ to approximate the mean of $f$ by setting it to the mean of $\mathrm{Gamma}(A,B)$, namely, $a = A/B$.
By alternating between updating $a \leftarrow A/B$, and updating $A,B$ via Equations~\ref{eqn:A} and \ref{eqn:B}, we iteratively refine the choice of $a$.
This yields the loop in Algorithm~\ref{algorithm:main}. 
Convergence is assessed by checking if the relative change in $a$ is smaller than the tolerance $\epsilon$, that is, if $|a_\mathrm{old} - a_\mathrm{new}| < \epsilon a_\mathrm{new}$, or equivalently, $|a_\mathrm{old}/a_\mathrm{new} - 1| < \epsilon$.



\bigskip
\begin{center}
{\large\bf SUPPLEMENTARY MATERIAL}
\end{center}

\begin{description}
\item[Gshape.jl:] Source code implementing the algorithm described in the article, as well as the code used to generate the figures in the article. (file type: Julia source code)
\item[Supplement.pdf:] Additional analysis and empirical results. (file type: PDF)
\end{description}

\small
\spacingset{1}
\bibliographystyle{abbrvnat}

\normalsize
\spacingset{1.6}


\newpage
\setcounter{page}{1}
\setcounter{section}{0}
\setcounter{table}{0}
\setcounter{figure}{0}
\renewcommand{\theHsection}{SIsection.\arabic{section}}
\renewcommand{\theHtable}{SItable.\arabic{table}}
\renewcommand{\theHfigure}{SIfigure.\arabic{figure}}
\renewcommand{\thepage}{S\arabic{page}}  
\renewcommand{\thesection}{S\arabic{section}}   
\renewcommand{\thetable}{S\arabic{table}}   
\renewcommand{\thefigure}{S\arabic{figure}}

\bigskip
\bigskip
\bigskip
\begin{center}
{\large\bf Supplementary material for ``Fast and Accurate Approximation of the Full Conditional for Gamma Shape Parameters''}\\
\textit{Jeffrey W. Miller}
\end{center}
\medskip

\section{Fixed points of the algorithm}
\label{section:fixedpoints}


In this section, we analyze the limiting behavior of Algorithm~\ref{algorithm:main} by characterizing its fixed points.
This provides a direct interpretation of the final point $a$ at which we match the first and second derivatives of $\log g$ to those of $\log f$
(in the notation of Section~\ref{section:derivation}) in the last iteration of the algorithm.  

In Theorem~\ref{theorem:fixedpoint}, we show that $a>0$ is a fixed point of Algorithm~\ref{algorithm:main} if and only if $\frac{\partial}{\partial a} \log f(a) + 1/a = 0$, and further, the algorithm is guaranteed to have a fixed point.  
It might seem more natural to instead match derivatives at a point where $\frac{\partial}{\partial a} \log f(a) = 0$, however, empirical evidence indicates that this tends to yield a worse approximation.

To interpret this result, note that if $f(a)$ were exactly equal to a gamma density, then the mean of $f$ would be the unique point $a$ such that 
$\frac{\partial}{\partial a} \log f(a) + 1/a = 0$.
There is also a natural interpretation of $\frac{\partial}{\partial a} \log f(a) + 1/a = 0$ in terms of optimization with a logarithmic barrier.
Barrier functions are frequently used in optimization to deal with inequality constraints.  Adding the logarithmic barrier function $\log a$ to the objective function $\log f(a)$ allows one to smoothly enforce the constraint that $a>0$.  This leads to the penalized objective function $\log f(a) + \log a$, and it turns out that the fixed points of Algorithm~\ref{algorithm:main} coincide with the points where the derivative of this penalized objective equals $0$. 

\begin{theorem}
\label{theorem:fixedpoint}
Let $x_1,\ldots,x_n > 0$, where $n\geq 0$, and let $\mu,a_0,b_0 > 0$.  For any $a>0$, $\frac{\partial}{\partial a} \log f(a) + 1/a = 0$ if and only if $a$ is a fixed point of Algorithm~\ref{algorithm:main}.  Further, there exists $a > 0$ such that $\frac{\partial}{\partial a} \log f(a) + 1/a = 0$.  
\end{theorem}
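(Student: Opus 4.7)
The plan for Theorem~\ref{theorem:fixedpoint} splits into two pieces. For the equivalence, my approach is a direct algebraic manipulation: a point $a > 0$ is a fixed point if and only if one pass of the loop with input $a$ returns $A/B = a$, i.e., $A = aB$. Substituting Equation~\ref{eqn:B} into $A = aB$, canceling the $A$ that appears on both sides, and dividing by $a > 0$ reduces the fixed-point condition to
\begin{equation*}
b_0 - a_0/a - n\log(a) + n\psi(a) + T = 0,
\end{equation*}
and adding $1/a$ to the expression for $\frac{\partial}{\partial a}\log f(a)$ from Equation~\ref{equation:derivatives} yields the negative of exactly this identity. A more conceptual way to see the same thing: the mean $a=A/B$ of $\mathrm{Gamma}(A,B)$ is characterized by $\frac{\partial}{\partial a}\log g(a)=-1/a$, and matching first derivatives with $\log f$ converts this into $\frac{\partial}{\partial a}\log f(a)+1/a=0$.

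For the existence half, my plan is an intermediate value argument applied to
\begin{equation*}
h'(a) := \frac{\partial}{\partial a}\log f(a) + \frac{1}{a} = n\bigl(\log a - \psi(a)\bigr) - T + \frac{a_0}{a} - b_0
\end{equation*}
on $(0,\infty)$. As $a \to 0^+$, the expansion $\psi(a) = -1/a - \gamma + O(a)$ gives $\log a - \psi(a) \to +\infty$ and $a_0/a \to +\infty$, so $h'(a) \to +\infty$. As $a \to \infty$, the expansion $\psi(a) = \log a - 1/(2a) + O(1/a^2)$ gives $n(\log a - \psi(a)) \to 0$ and $a_0/a \to 0$, leaving $h'(a) \to -(T + b_0)$. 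To argue that this limit is strictly negative, I would rewrite
\begin{equation*}
T = \sum_{i=1}^n \bigl(x_i/\mu - 1 - \log(x_i/\mu)\bigr),
\end{equation*}
and observe that each summand has the form $u - 1 - \log u$ with $u > 0$, which is nonnegative by the standard inequality $u - 1 \geq \log u$. Together with $b_0 > 0$, this gives $h'(a) \to -(T+b_0) < 0$, and continuity of $h'$ yields the required zero. The degenerate case $n=0$ fits the same framework, with $T=0$ and unique root $a = a_0/b_0$.

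The main obstacle is more one of careful bookkeeping than of insight. The equivalence requires tracking the cancellations in the $A = aB$ identity and matching them against Equation~\ref{equation:derivatives}, and the existence argument relies on correctly recalling the small- and large-argument asymptotics of the digamma function. The only genuinely content-bearing step is the nonnegativity of $T$, which pins down the sign of the limit at $+\infty$ and collapses to $u - 1 \geq \log u$. No deeper analytic machinery appears to be necessary.
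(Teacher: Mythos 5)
Your proposal is correct in substance, and the existence half is essentially identical to the paper's argument: the same rewriting of $\frac{\partial}{\partial a}\log f(a)+1/a$ as $n(\log a-\psi(a))+a_0/a-b_0-T$, the same nonnegativity of $T$ via $u-1\geq\log u$, and the same Intermediate Value Theorem step using the asymptotics of $\log a - \psi(a)$ at $0$ and $\infty$. For the equivalence, your route differs slightly from the paper's: you substitute Equation~\ref{eqn:B} directly into $A=aB$, so that $A$ (and hence the trigamma term) cancels immediately, whereas the paper first solves the derivative-matching conditions for $A=-a^2 d_2(a)+1$ and $B=-a d_2(a)-d_1(a)$ and then rearranges $a=A/B$. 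Your version is algebraically cleaner, but it leaves one loose end that the paper addresses explicitly: $A=aB$ is equivalent to ``$a$ is a fixed point,'' i.e.\ $a=A/B$, only if $B\neq 0$, and you never rule out the degenerate case $A=B=0$. The paper handles this by showing $a\,d_2(a)+d_1(a)\neq 0$ using the inequality $\psi'(a)>1/a$. In your framework the same inequality closes the gap even more directly: $A=a_0+na\bigl(a\psi'(a)-1\bigr)>a_0>0$, so $A=aB$ forces $B=A/a>0$ and the update $A/B$ is well defined and equal to $a$. With that one line added, your argument is complete; everything else, including the $n=0$ case, is fine.
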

\begin{proof}
First, we show that $\frac{\partial}{\partial a} \log f(a) + 1/a = 0$ characterizes the fixed points.
Define $d_1(a) = \frac{\partial}{\partial a} \log f(a)$ and $d_2(a) = \frac{\partial^2}{\partial a^2} \log f(a)$.
Recall from Section~\ref{section:derivation} that at each iteration, the algorithm updates $a$ by setting $a\leftarrow A/B$ where $A$ and $B$ are defined such that $d_1(a) = \frac{\partial}{\partial a} \log g(a) = (A-1)/a - B$ and $d_2(a) = \frac{\partial^2}{\partial a^2} \log g(a) = -(A-1)/a^2$ for the current value of $a$.
Solving for $A$ and $B$ implies that $A = -a^2 d_2(a) + 1$ and $B = -a d_2(a) - d_1(a)$.
Thus, $a$ is a fixed point of the algorithm precisely when
$a = A/B = (a^2 d_2(a) - 1) / (a d_2(a) + d_1(a))$, or equivalently, by rearranging and canceling terms, when $d_1(a) + 1/a = 0$.

To justify rearranging and cancelling, note that $d_1(a)$ and $d_2(a)$ are finite for all $a\in(0,\infty)$; see Equation~\ref{equation:derivatives}. If $a$ is a fixed point then $a d_2(a) + d_1(a) \neq 0$ since otherwise $A/B$ is infinite or undefined. On the other hand, if $d_1(a) + 1/a = 0$ then $a d_2(a) + d_1(a) \neq 0$, since otherwise $1/a^2 = d_2(a) = n/a - n\psi'(a) - \frac{a_0-1}{a^2}$, which would imply that $n\big(\psi'(a) - 1/a\big) = -a_0/a^2 < 0$, but this is a contradiction since $\psi'(a) - 1/a > 0$ for all $a>0$ \citep[Eqn.\ 13]{qi2009refinements}.

Now, we show that there always exists $a>0$ such that $d_1(a) + 1/a = 0$.
By Equation~\ref{equation:derivatives}, 
\begin{align*} 
d_1(a) + 1/a &= n\log(a) + n - n \psi(a) - (T+n) + a_0/a - b_0 \\
&= n\big(\log(a) - \psi(a)\big) + a_0/a - b_0 - T.
\end{align*}
Below, we show that $T\geq 0$. Assuming $T\geq 0$ for the moment, it follows from the Intermediate Value Theorem that $d_1(a) + 1/a = 0$ for some $a>0$, since
$\log(a) - \psi(a) \to \infty$ as $a\to 0$ and $\log(a) - \psi(a) \to 0$ as $a\to \infty$ \citep[Eqn.\ 2.2]{alzer1997some}.

To verify that $T\geq 0$, recall from Equation~\ref{equation:derivatives} that $T = S/\mu - R + n\log(\mu) - n$, where $S = \sum_{i=1}^n x_i$ and $R = \sum_{i=1}^n \log(x_i)$.
By rearranging terms, $T = \sum_{i=1}^n \big(x_i/\mu - \log(x_i/\mu)- 1\big)$, and by straightforward calculus, one can check that $x - \log(x) - 1 \geq 0$ for all $x > 0$.





\end{proof}

\section{Initializing via approximation of $\Gamma(a)$}
\label{section:initialization}

To find good starting values of $A$ and $B$, we consider approximations to the gamma function $\Gamma(a)$ in two regimes: (i) when $a$ is large, and (ii) when $a$ is small.

By Stirling's approximation to $\Gamma(a)$, we have that $\Gamma(a)/a^a \sim \sqrt{2\pi}\,a^{-1/2} e^{-a}$ as $a\to \infty$. 
(Here, $h_1(a)\sim h_2(a)$ as $a\to a^*$ means that $h_1(a)/h_2(a)\to 1$ as $a\to a^*$.)
Plugging this into Equation~\ref{eqn:f}, we get
$$p(x_{1:n} \mid a,\mu,a_0,b_0) \approx 
(2\pi)^{-n/2} a^{n/2} e^{n a} \exp\!\big(-R - (T+n)a \big) \propto a^{n/2} \exp(-T a),$$
and therefore, $f(a)$ is approximately proportional to $\mathrm{Gamma}(a\mid a_0+n/2,\, b_0+T)$
when $a$ is large.  This suggests initializing with $A = a_0 + n/2$ and $B = b_0 + T$, as in Algorithm~\ref{algorithm:main}.

As $a \to 0$, we have $\Gamma(a)/a^a\sim a^{-1}$, since $a^a \to 1$ and $\Gamma(a)/a^{-1} = a\Gamma(a) = \Gamma(a + 1) \to \Gamma(1) = 1$ as $a\to 0$.
Plugging this into Equation~\ref{eqn:f}, we see that
$$p(x_{1:n} \mid a,\mu,a_0,b_0) \approx 
a^n \exp\!\big(-R - (T+n)a \big) \propto a^n \exp\!\big(-(T+n) a\big),$$
and therefore, $f(a)$ is approximately proportional to $\mathrm{Gamma}(a\mid a_0+n,\, b_0+T+n)$
when $a$ is small. This suggests initializing with $A = a_0 + n$ and $B = b_0 + T+n$.
While similar to the large $a$ version, 
empirically we find that this initialization tends to converge slightly slower, by 1 iteration or so.

\section{Numerical integration for truth comparison}
\label{section:integration}

Since the true full conditional distribution is not analytically tractable, it is necessary to approximate the integrals for the total variation distance and the Kullback--Leibler divergences between the true and approximate distributions. To do this, we use a simple adaptive numerical integration technique using an importance distribution to choose points (basically, a deterministic importance sampler).

Specifically, suppose we want to approximate the integral $\int h(a) g(a) d a$ where $g(a)$ is a probability density on $\R$, and $h(a)$ is some function on $\R$. Let $G(a)$ be the CDF of $g$, and let $a_i = G^{-1}(u_i)$ where $u_i = (i-0.5)/N$ for $i = 1,\ldots,N$. (If the inverse of $G$ does not exist, use the generalized inverse distribution function.)  Then
$\int h(a) g(a) d a \approx \frac{1}{N} \sum_{i = 1}^N h(a_i) $
when $N$ is large. This is similar to Monte Carlo, but is much more accurate due to using equally-spaced points $u_i$ rather than random draws $u_i \sim \mathrm{Uniform}(0,1)$.

We apply this approximation with $g(a) = \mathrm{Gamma}(a|A,B)$, where $A,B>0$ are the output from Algorithm~\ref{algorithm:main}. Let $f(a) = p(a \mid x_{1:n},\mu,a_0,b_0)$ be the true full conditional. Since the normalizing constant of $f$ is unknown, write $f(a) = \tilde f(a)/Z$ where $\tilde f(a)$ can be easily computed, and observe that
$$Z = \int_{-\infty}^\infty \tilde f(a) d a = \int_{0}^\infty \frac{\tilde f(a)}{g(a)} g(a) d a = \int_{-\infty}^\infty w(a) g(a) d a $$
where $w(a) = \tilde f(a)/g(a)$ for $a>0$ and $w(a) = 0$ for $a\leq 0$, since $g(a) > 0$ when $a>0$, and $f(a)=g(a)=0$ when $a\leq 0$.
Thus, $Z \approx \frac{1}{N}\sum_{i=1}^N w(a_i)$.

Define $\widehat f(a) = \tilde f(a) / \big(\frac{1}{N}\sum_{i=1}^N w(a_i)\big)$, so that $\widehat f(a)\approx f(a)$.
The integral for total variation distance can be written in the form $\int h(a) g(a) d a$ as follows:
$$ d_\mathrm{TV}(f,g) = \frac{1}{2} \int_{-\infty}^\infty |f(a)-g(a)|d a = \frac{1}{2} \int_0^\infty \Big|\frac{f(a)}{g(a)}-1\Big|g(a) d a = \int_{-\infty}^\infty h(a) g(a) d a $$
where $h(a) = \frac{1}{2}|f(a)/g(a)-1|$ for $a>0$ and $h(a) = 0$ for $a\leq 0$. Thus, since $a_i>0$,
$$d_\mathrm{TV}(f,g) \approx \frac{1}{N}\sum_{i=1}^N \frac{1}{2}\Big|\frac{\widehat f(a_i)}{g(a_i)} - 1\Big|.$$
Similarly, 
$\displaystyle d_\mathrm{KL}(f,g) \approx \frac{1}{N}\sum_{i=1}^N \frac{\widehat f(a_i)}{g(a_i)}\log\frac{\widehat f(a_i)}{g(a_i)} $
and 
$\displaystyle d_\mathrm{KL}(g,f) \approx \frac{1}{N}\sum_{i=1}^N \log\frac{g(a_i)}{\widehat f(a_i)}. $

\section{Additional empirical results}
\label{section:additional-results}

Figures~\ref{figure:TV-1}, \ref{figure:TV-2}, and \ref{figure:TV-3} show the total variation distance for all the cases with $a_0 = 1$, $a_0 = 0.1$, and $a_0 = 0.01$, respectively.
Figures~\ref{figure:KL2-1} and \ref{figure:KL1-1} show the Kullback--Leibler divergences for all the cases with $a_0 = 1$.

\begin{figure}
  \centering
  \includegraphics[trim=.3cm 0 1.2cm 0, clip, width=0.325\textwidth]{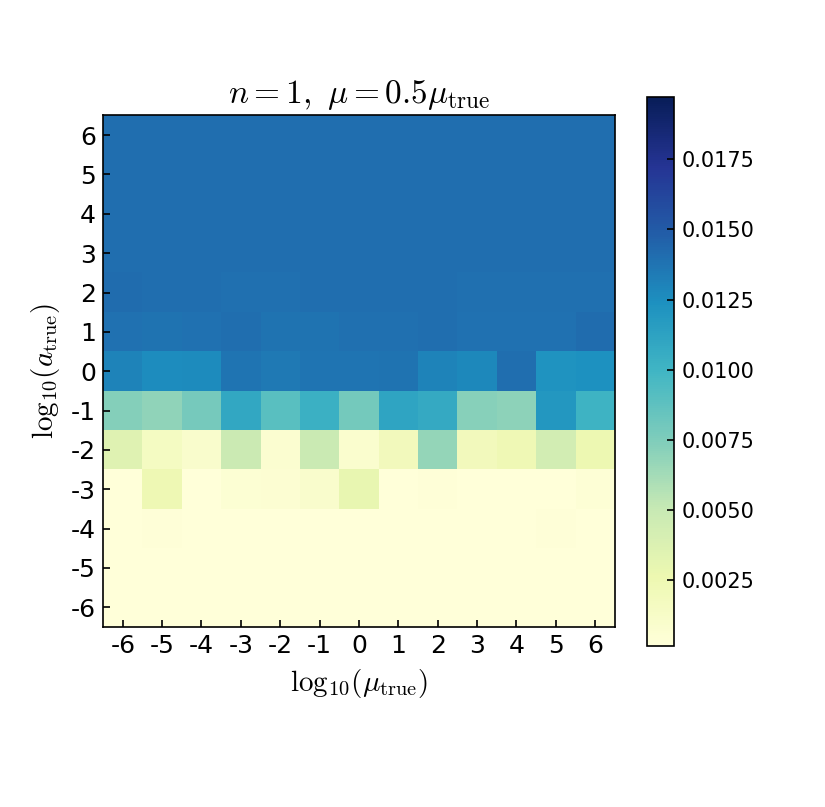}
  \includegraphics[trim=.3cm 0 1.2cm 0, clip, width=0.325\textwidth]{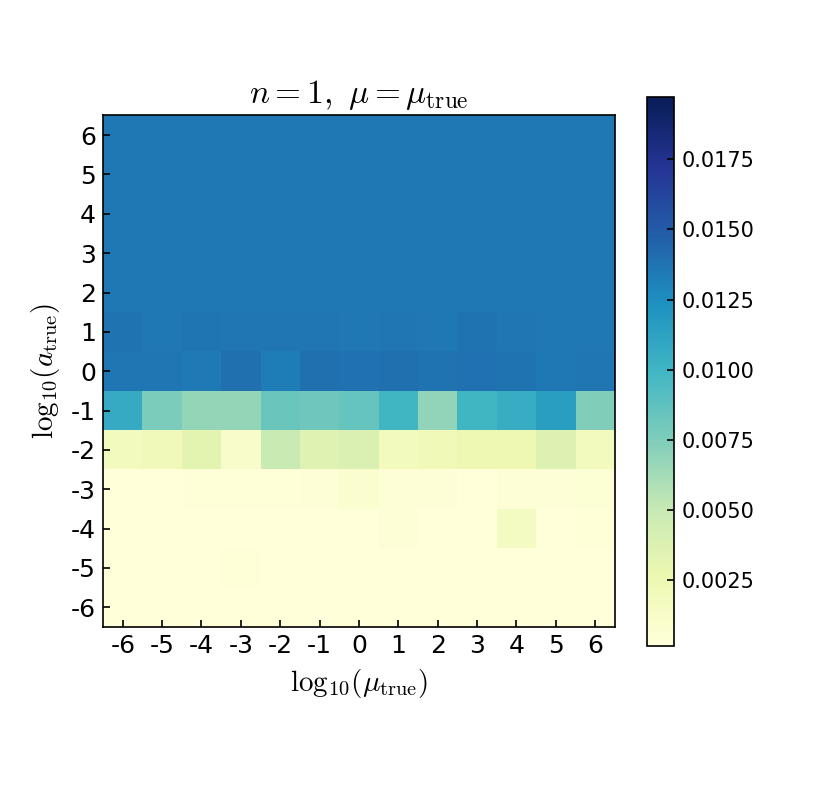}
  \includegraphics[trim=.3cm 0 1.2cm 0, clip, width=0.325\textwidth]{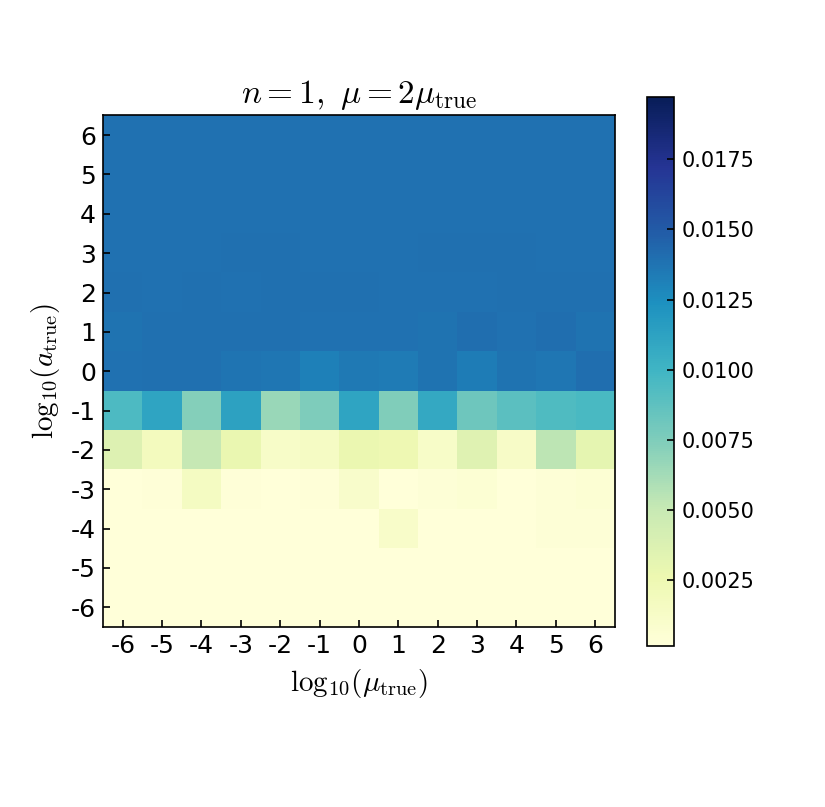}
  \includegraphics[trim=.3cm 0 1.2cm 0, clip, width=0.325\textwidth]{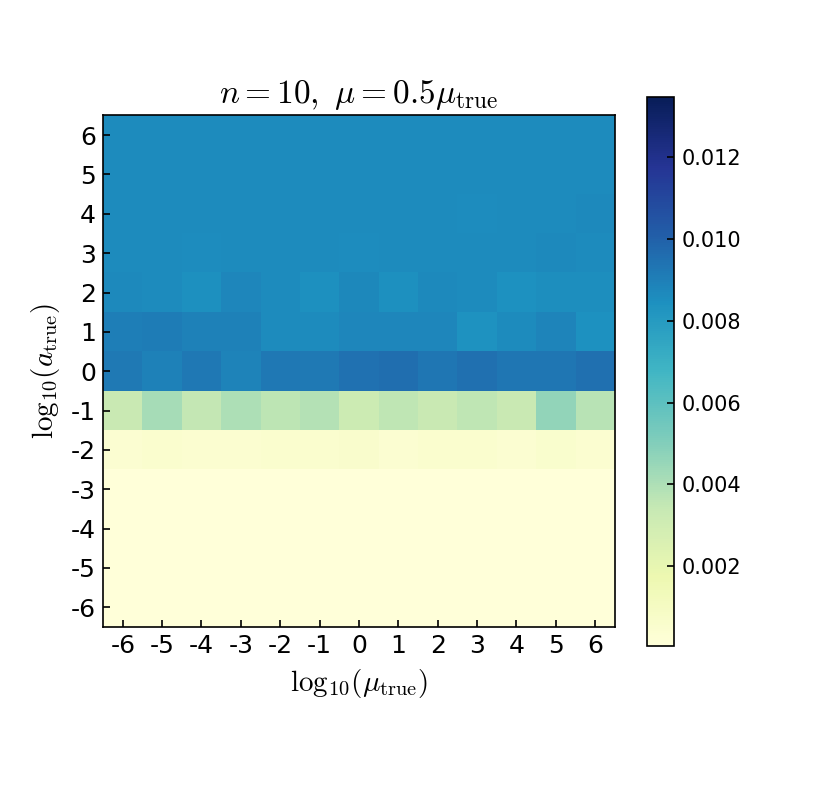}
  \includegraphics[trim=.3cm 0 1.2cm 0, clip, width=0.325\textwidth]{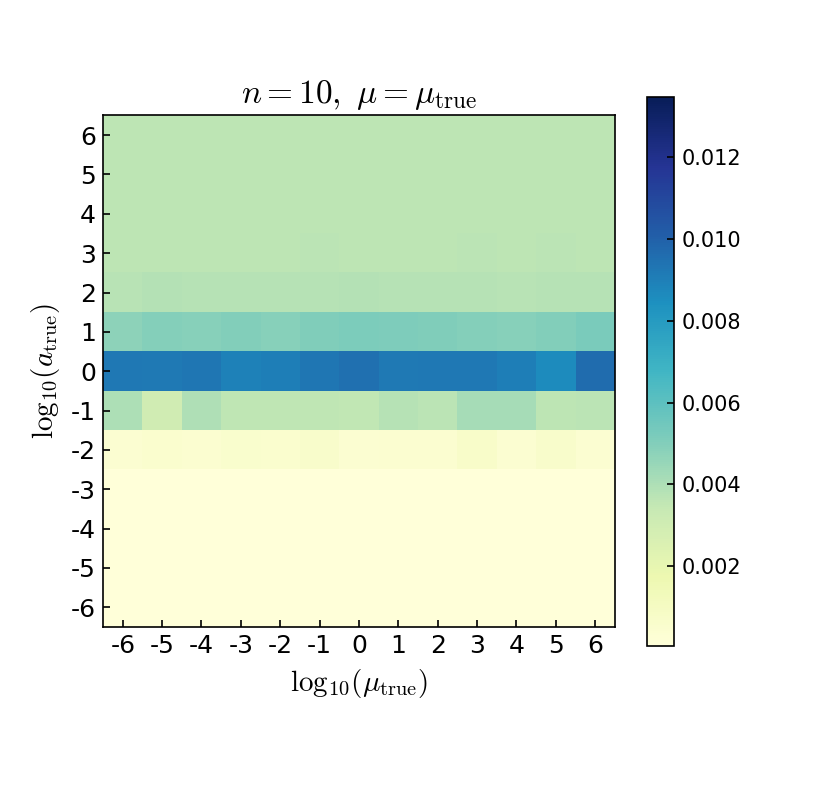}
  \includegraphics[trim=.3cm 0 1.2cm 0, clip, width=0.325\textwidth]{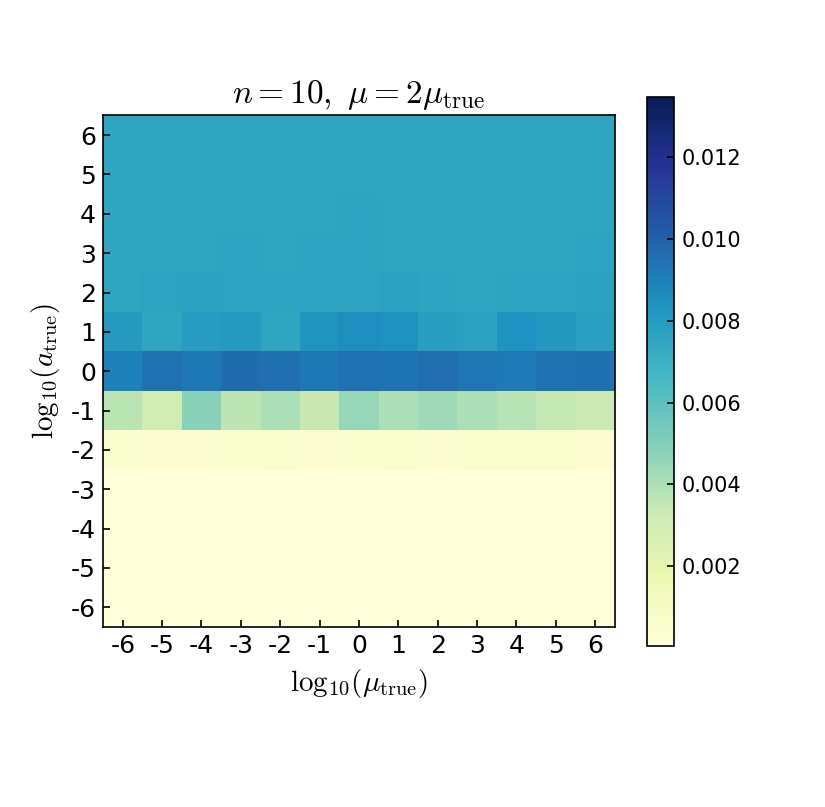}
  \includegraphics[trim=.3cm 0 1.2cm 0, clip, width=0.325\textwidth]{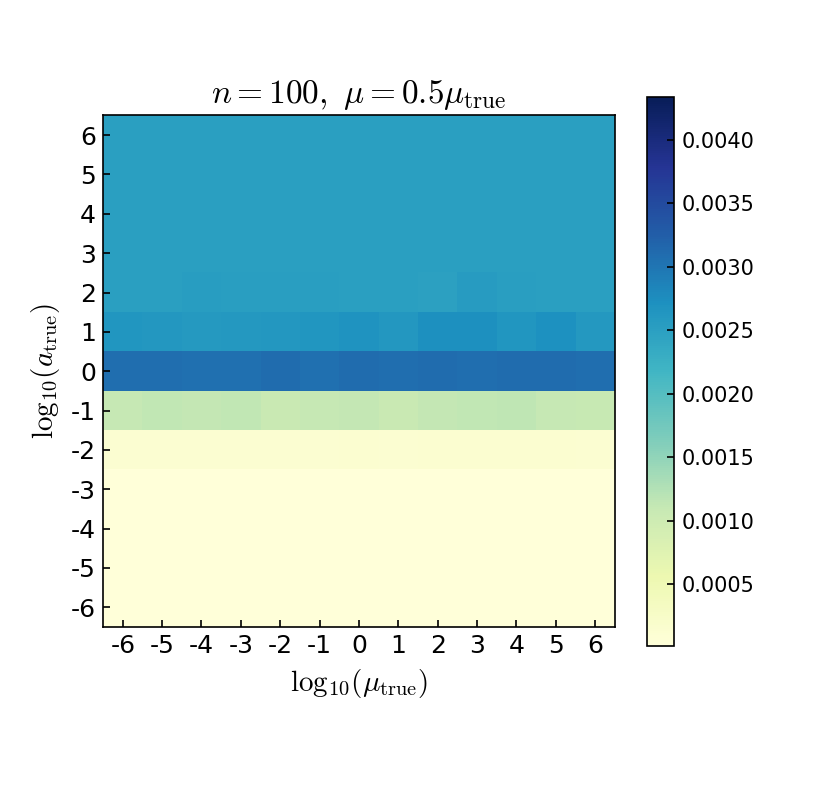}
  \includegraphics[trim=.3cm 0 1.2cm 0, clip, width=0.325\textwidth]{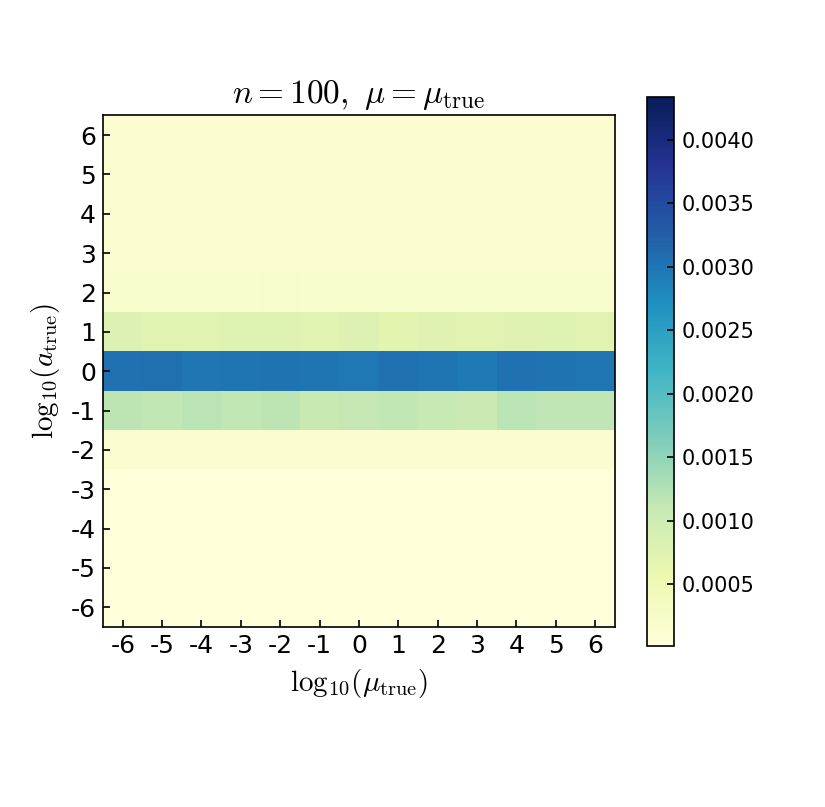}
  \includegraphics[trim=.3cm 0 1.2cm 0, clip, width=0.325\textwidth]{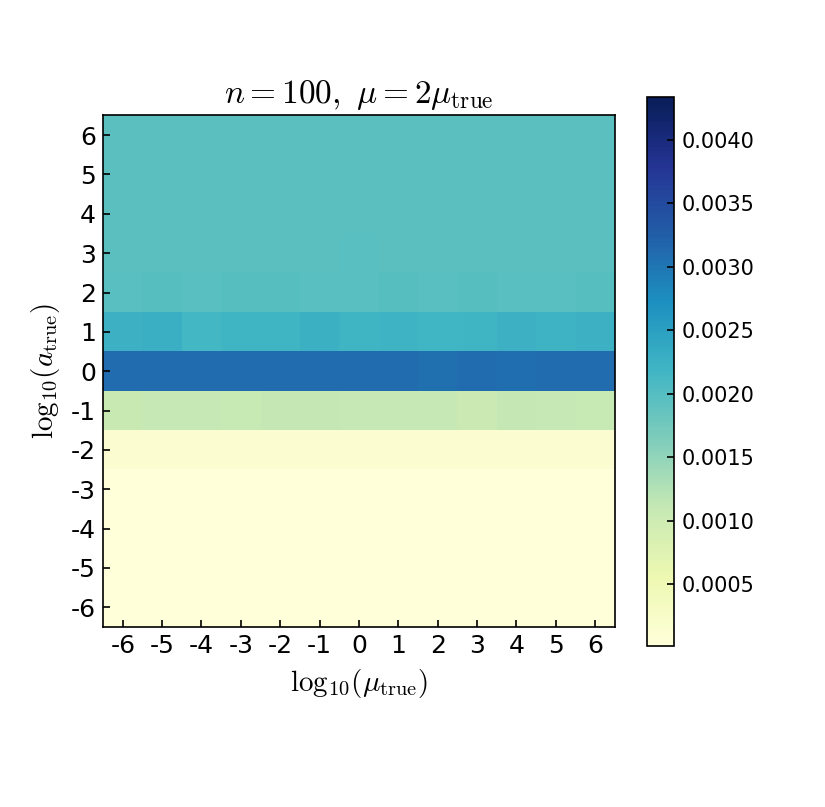}
  \caption{Total variation distance $d_\mathrm{TV}(f,g)$ between the true full conditional $f$ and the approximate full conditional $g$, for every case with $a_0 = 1$. The values shown are the averages over the five independent data sets for each case. Note the scale at the right of each plot.}
  \label{figure:TV-1}
\end{figure}

\begin{figure}
  \centering
  \includegraphics[trim=.3cm 0 1.2cm 0, clip, width=0.325\textwidth]{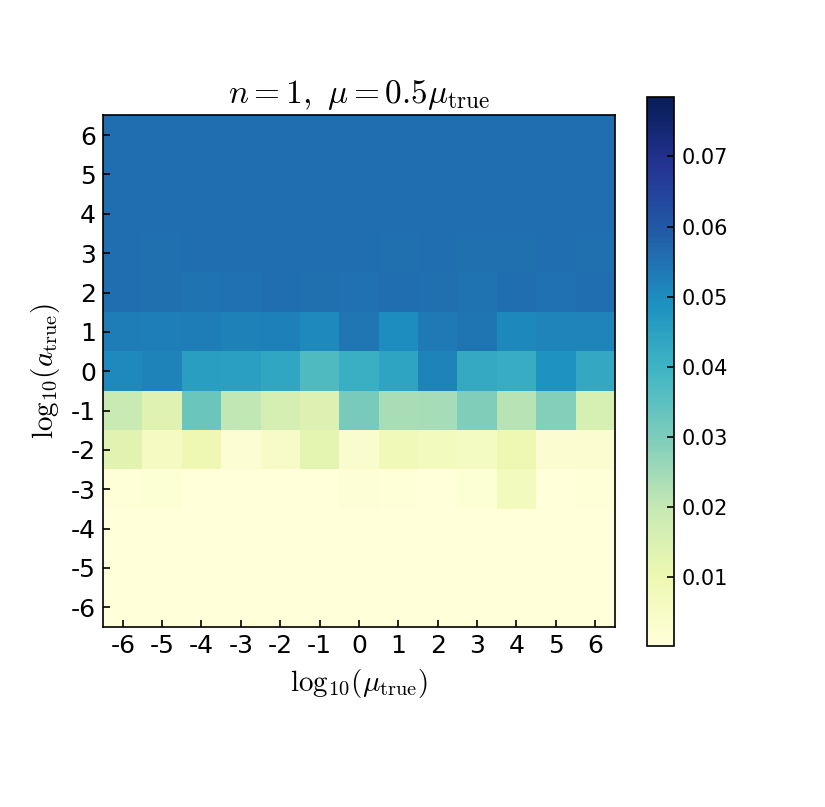}
  \includegraphics[trim=.3cm 0 1.2cm 0, clip, width=0.325\textwidth]{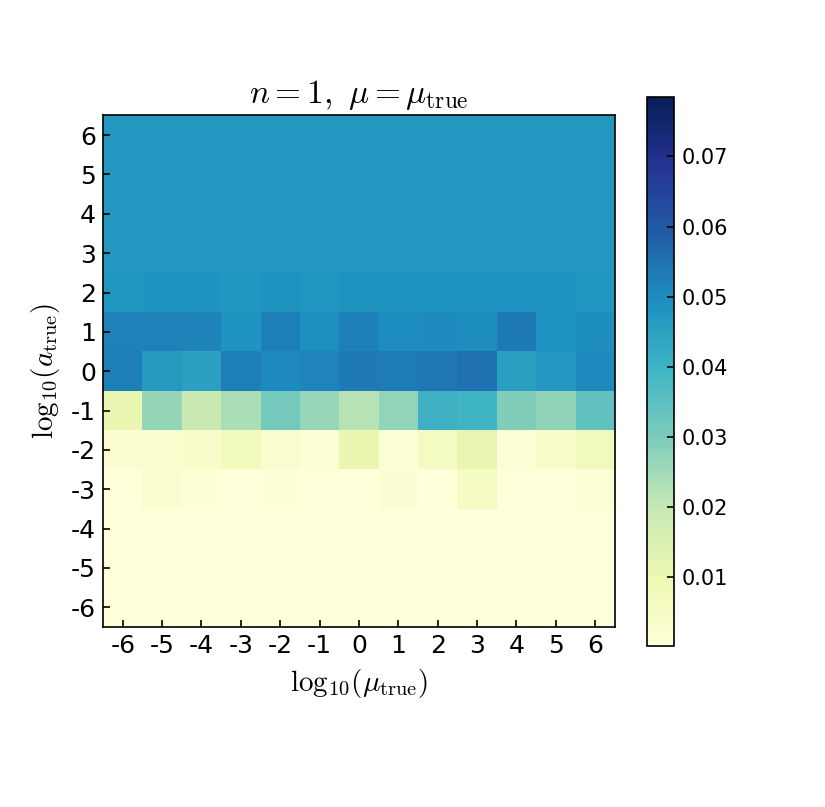}
  \includegraphics[trim=.3cm 0 1.2cm 0, clip, width=0.325\textwidth]{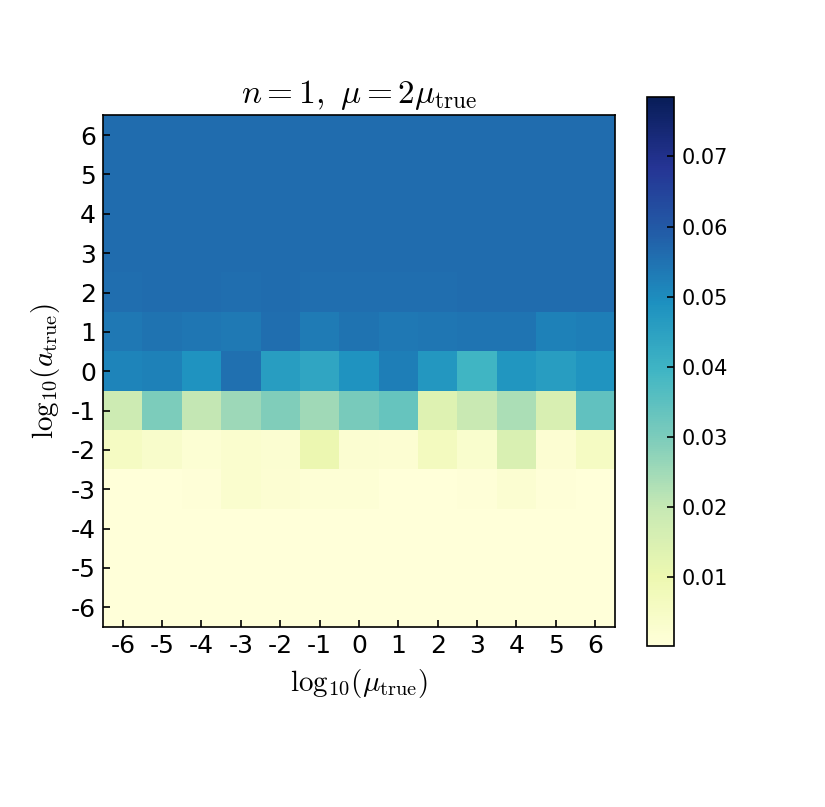}
  \includegraphics[trim=.3cm 0 1.2cm 0, clip, width=0.325\textwidth]{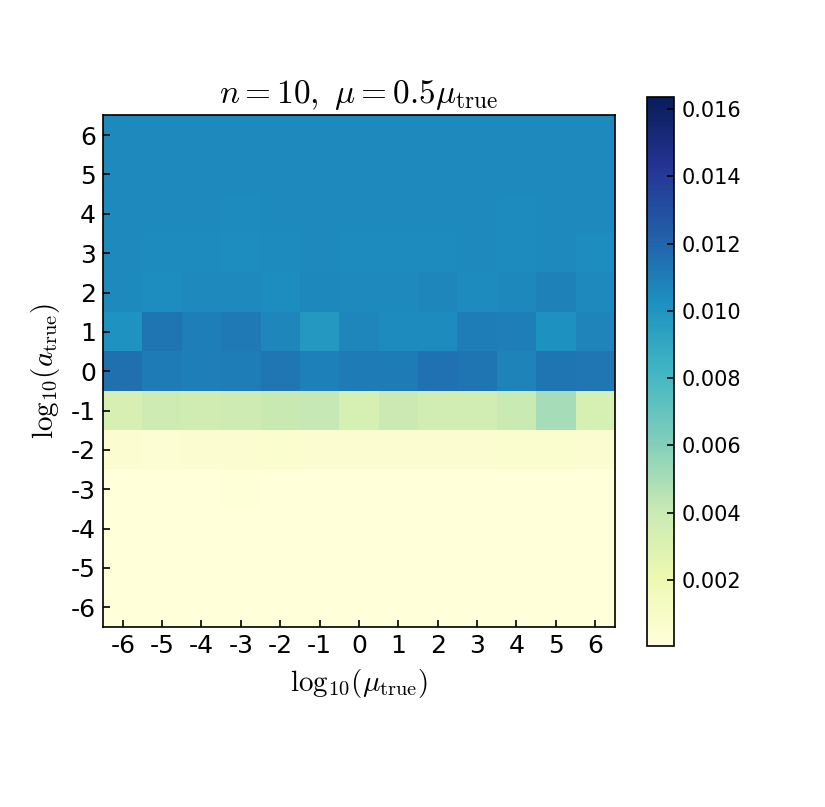}
  \includegraphics[trim=.3cm 0 1.2cm 0, clip, width=0.325\textwidth]{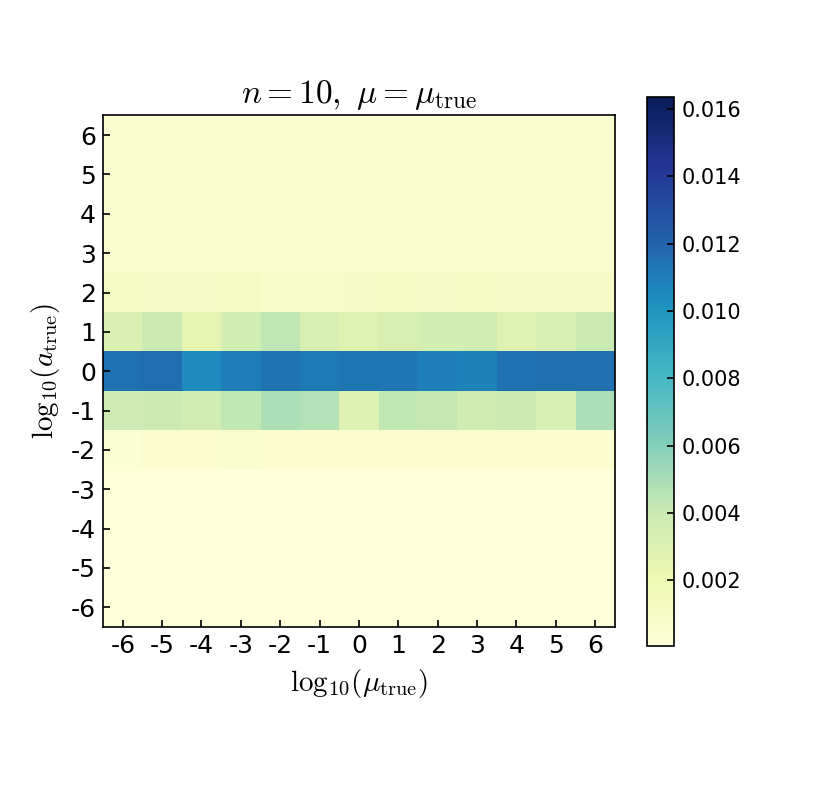}
  \includegraphics[trim=.3cm 0 1.2cm 0, clip, width=0.325\textwidth]{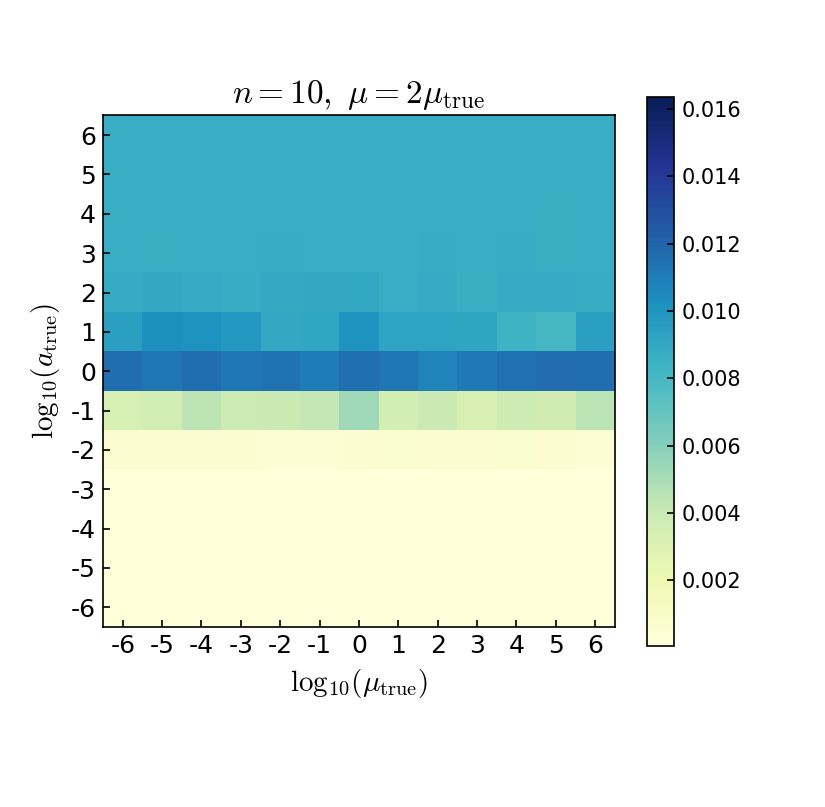}
  \includegraphics[trim=.3cm 0 1.2cm 0, clip, width=0.325\textwidth]{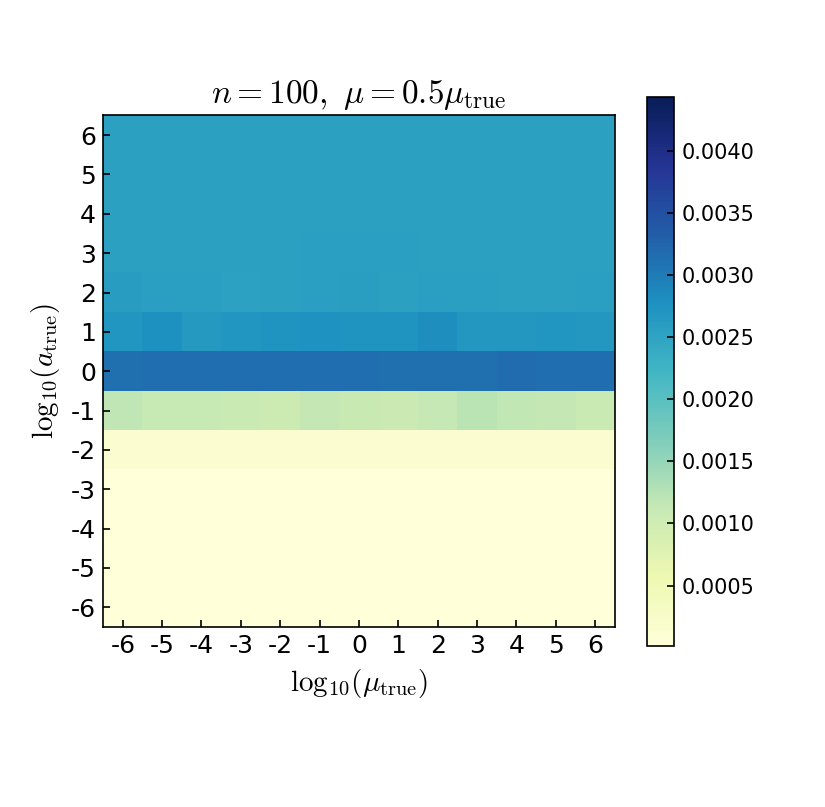}
  \includegraphics[trim=.3cm 0 1.2cm 0, clip, width=0.325\textwidth]{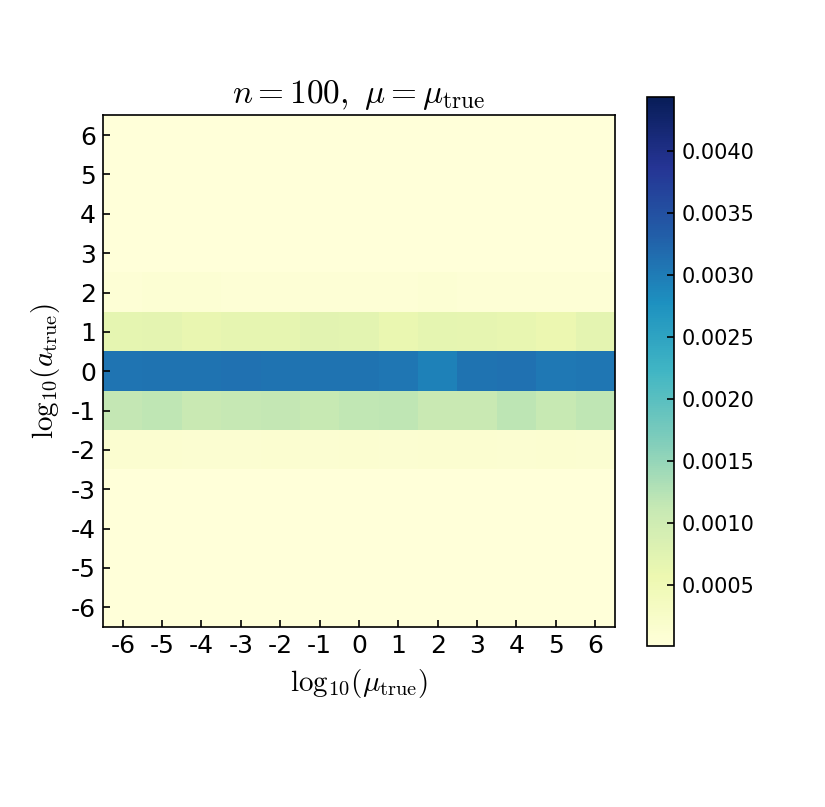}
  \includegraphics[trim=.3cm 0 1.2cm 0, clip, width=0.325\textwidth]{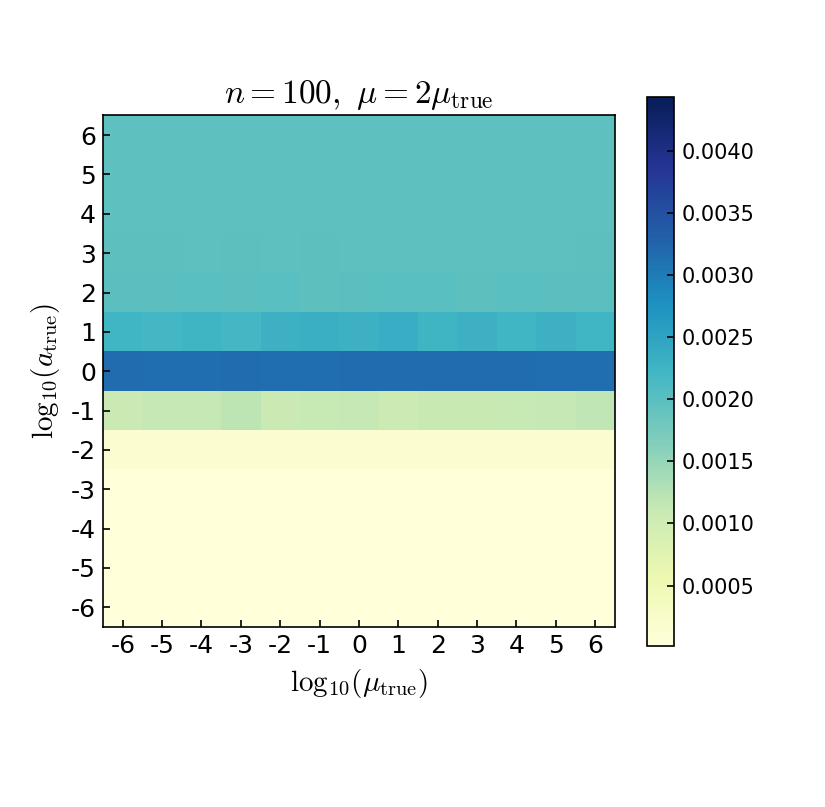}
  \caption{Total variation distance $d_\mathrm{TV}(f,g)$ between the true full conditional $f$ and the approximate full conditional $g$, for every case with $a_0 = 0.1$. The values shown are the averages over the five independent data sets for each case. Note the scale at the right of each plot.}
  \label{figure:TV-2}
\end{figure}

\begin{figure}
  \centering
  \includegraphics[trim=.3cm 0 1.2cm 0, clip, width=0.325\textwidth]{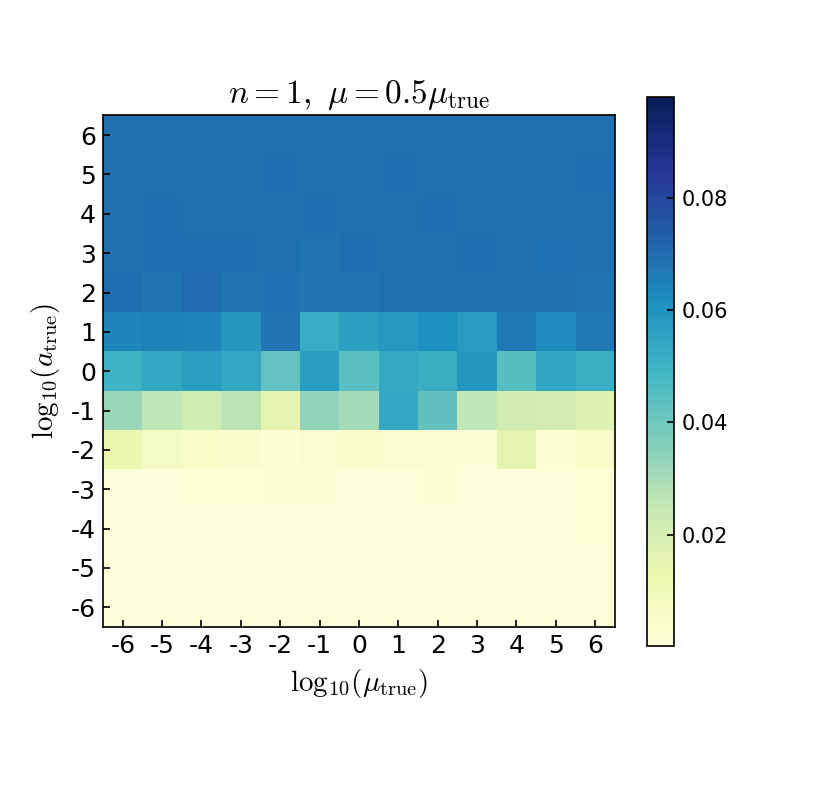}
  \includegraphics[trim=.3cm 0 1.2cm 0, clip, width=0.325\textwidth]{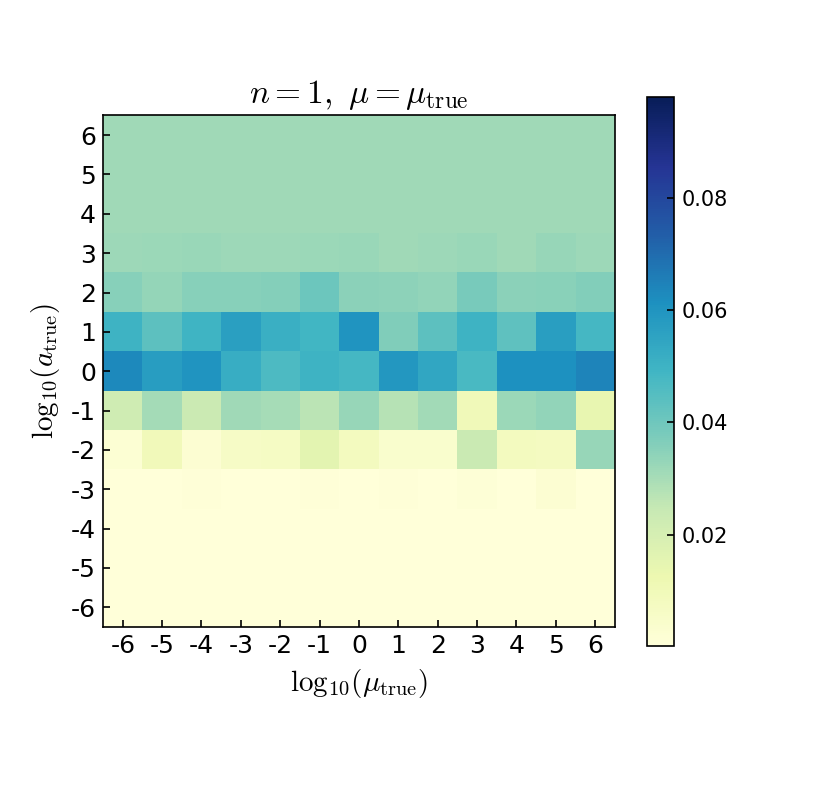}
  \includegraphics[trim=.3cm 0 1.2cm 0, clip, width=0.325\textwidth]{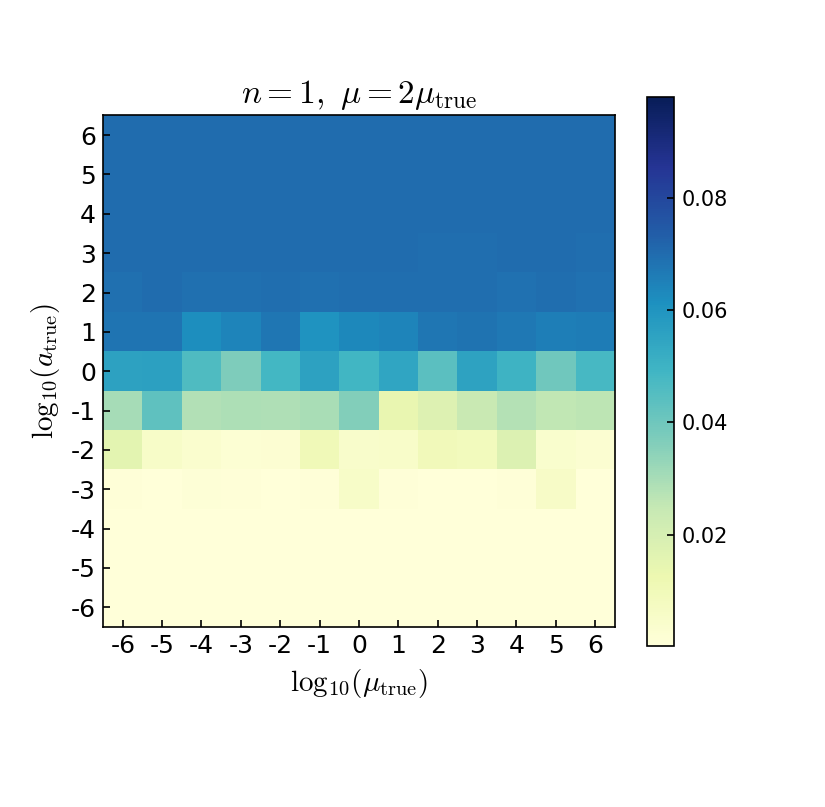}
  \includegraphics[trim=.3cm 0 1.2cm 0, clip, width=0.325\textwidth]{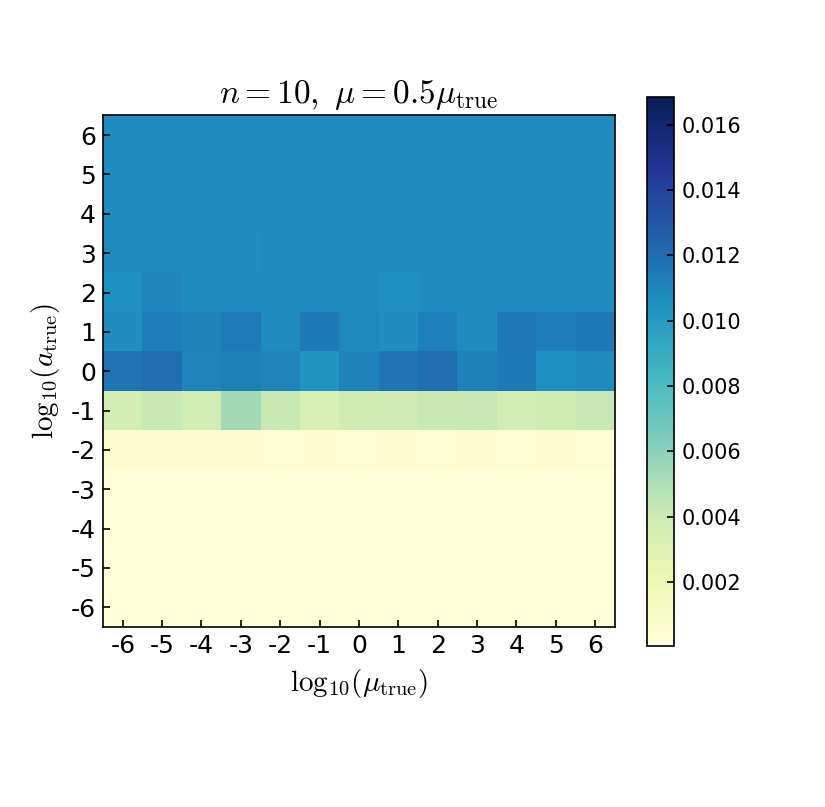}
  \includegraphics[trim=.3cm 0 1.2cm 0, clip, width=0.325\textwidth]{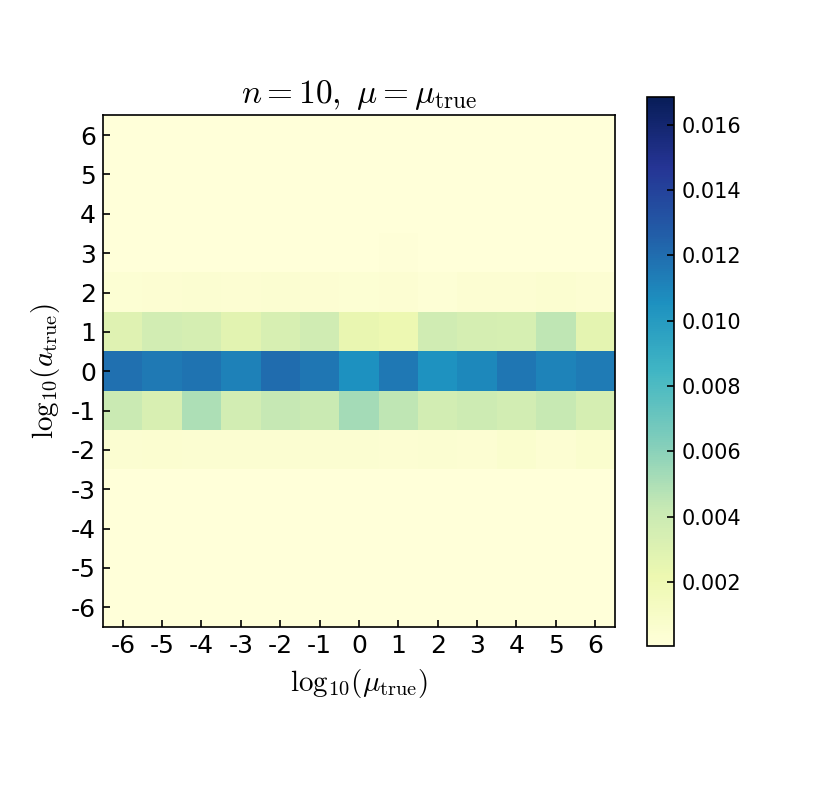}
  \includegraphics[trim=.3cm 0 1.2cm 0, clip, width=0.325\textwidth]{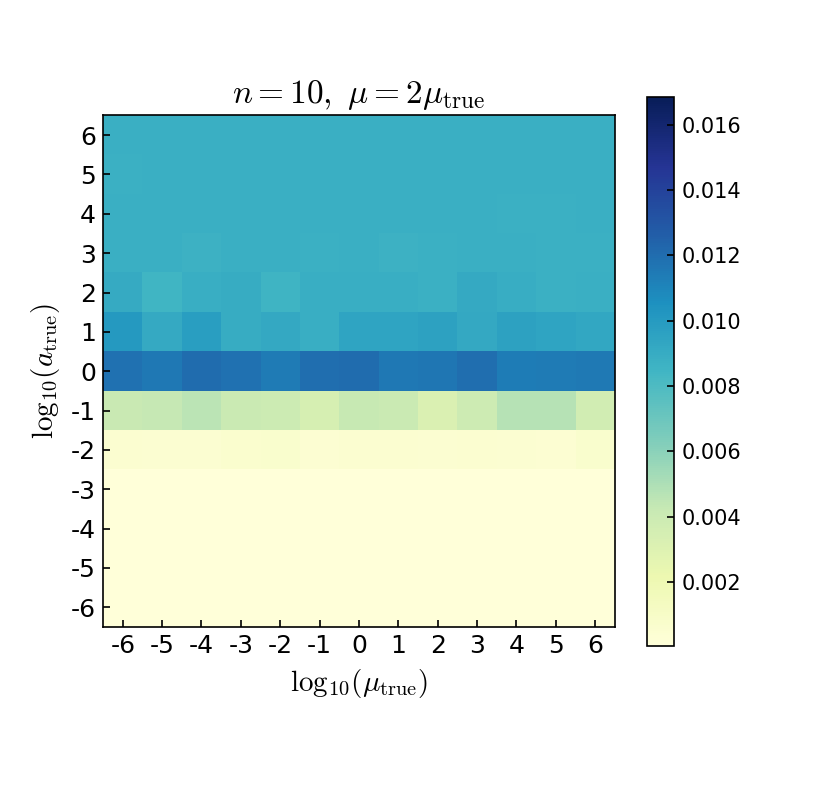}
  \includegraphics[trim=.3cm 0 1.2cm 0, clip, width=0.325\textwidth]{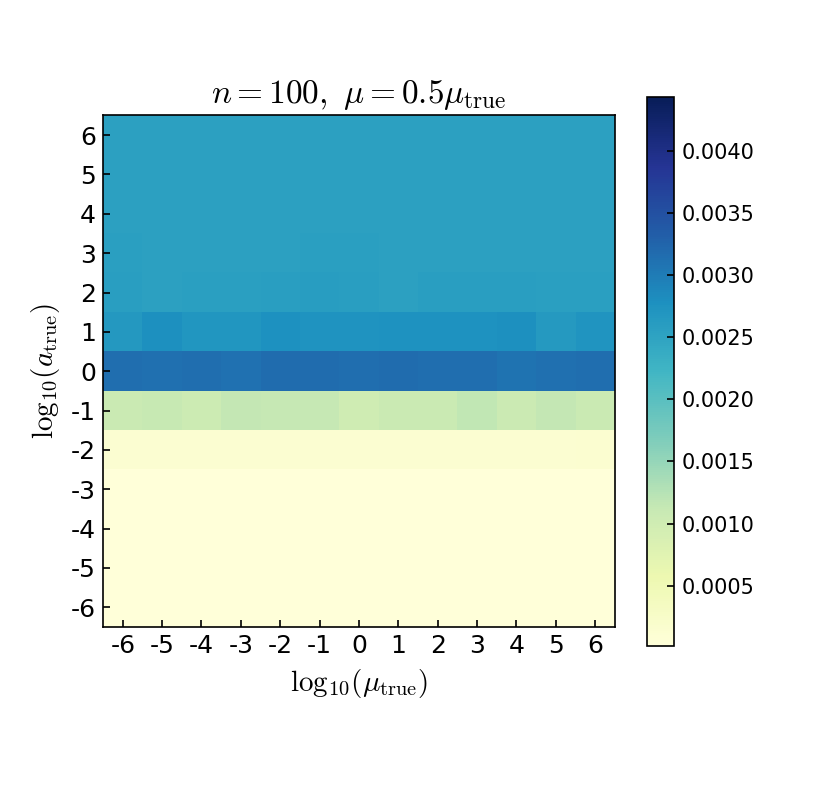}
  \includegraphics[trim=.3cm 0 1.2cm 0, clip, width=0.325\textwidth]{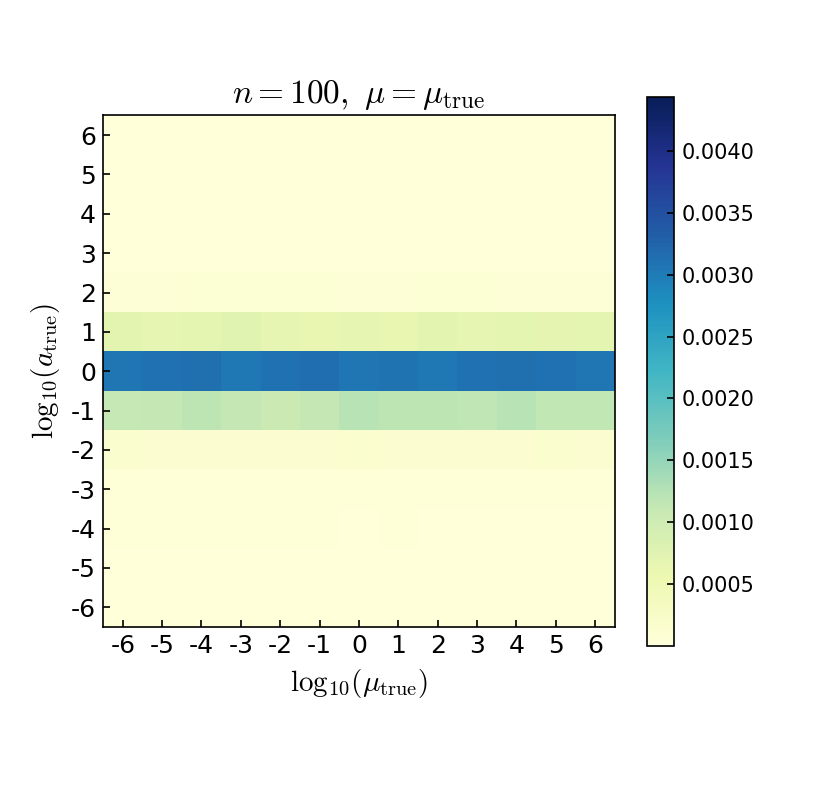}
  \includegraphics[trim=.3cm 0 1.2cm 0, clip, width=0.325\textwidth]{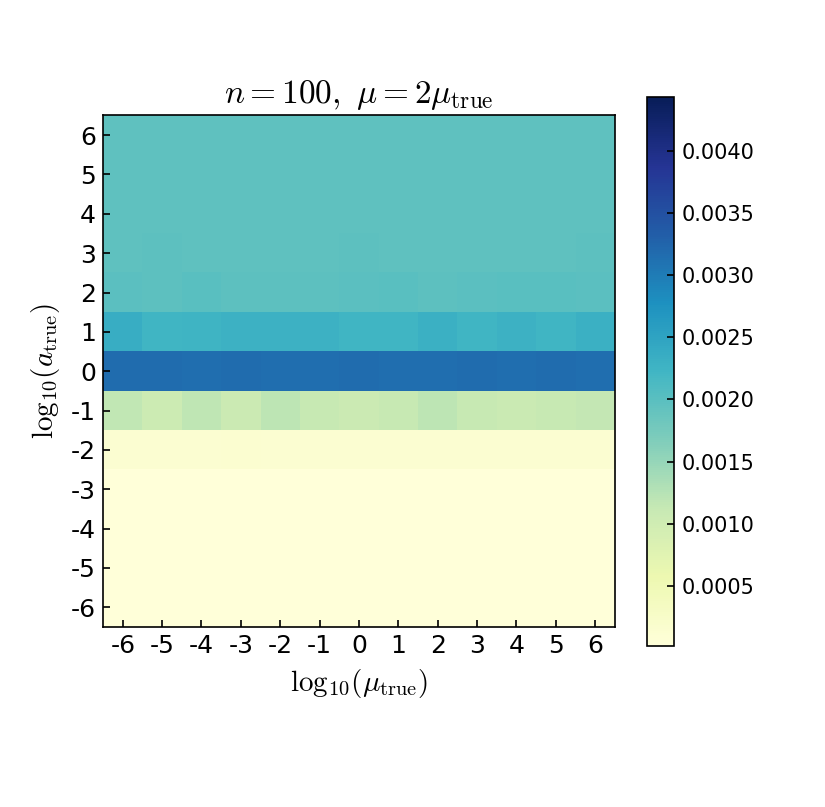}
  \caption{Total variation distance $d_\mathrm{TV}(f,g)$ between the true full conditional $f$ and the approximate full conditional $g$, for every case with $a_0 = 0.01$. The values shown are the averages over the five independent data sets for each case. Note the scale at the right of each plot.}
  \label{figure:TV-3}
\end{figure}

\begin{figure}
  \centering
  \includegraphics[trim=.3cm 0 1cm 0, clip, width=0.325\textwidth]{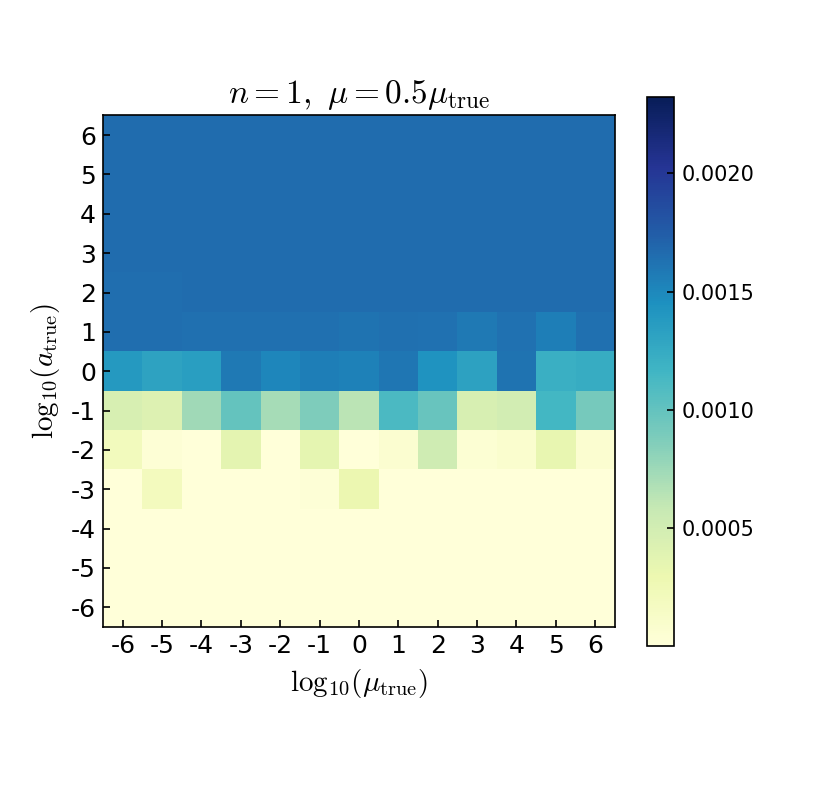}
  \includegraphics[trim=.3cm 0 1cm 0, clip, width=0.325\textwidth]{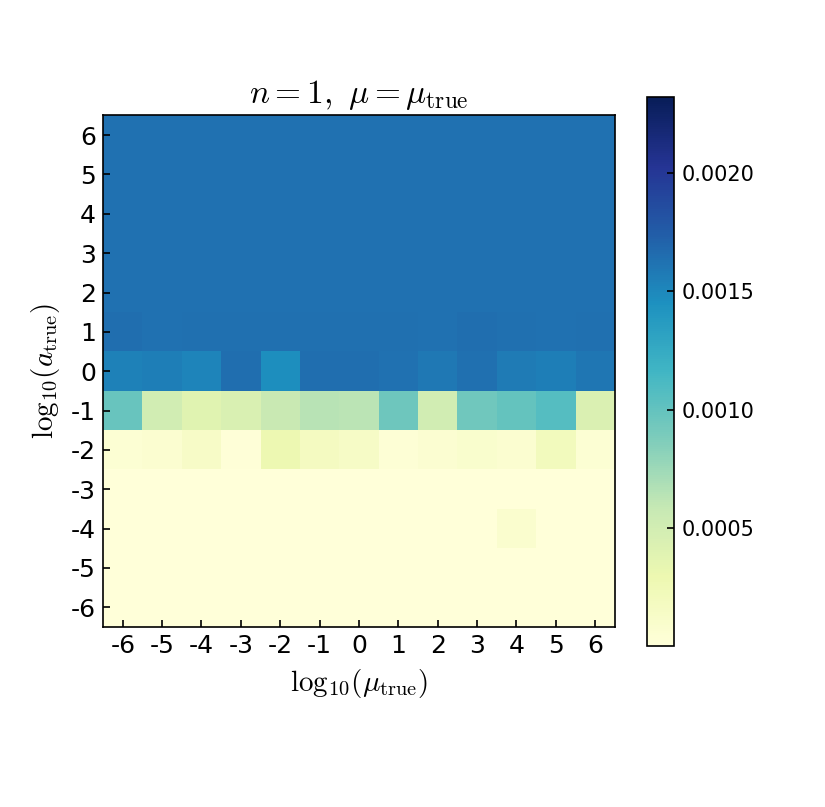}
  \includegraphics[trim=.3cm 0 1cm 0, clip, width=0.325\textwidth]{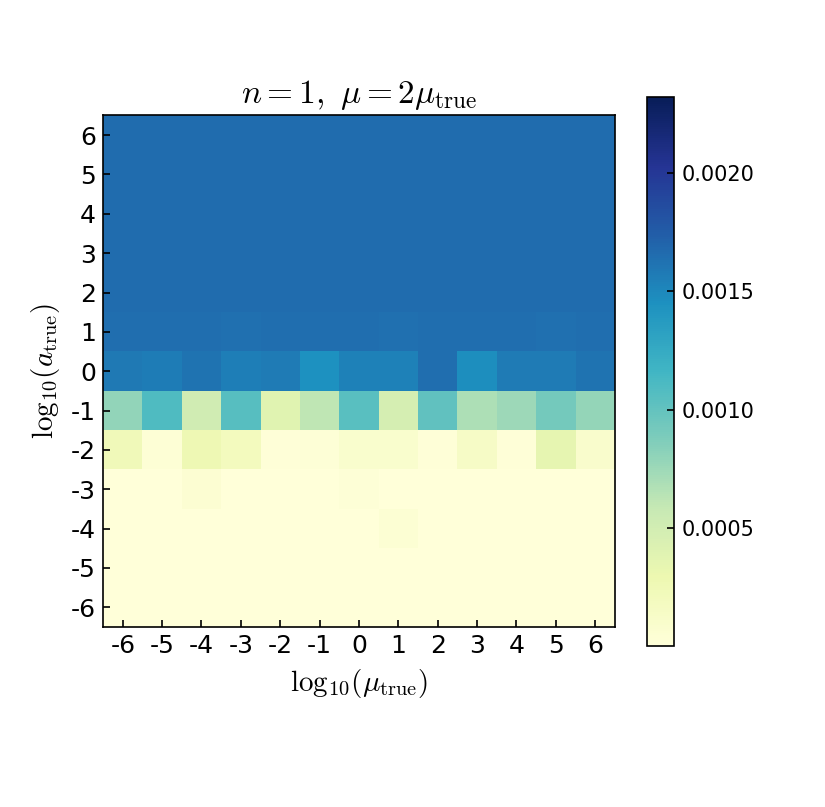}
  \includegraphics[trim=.3cm 0 1cm 0, clip, width=0.325\textwidth]{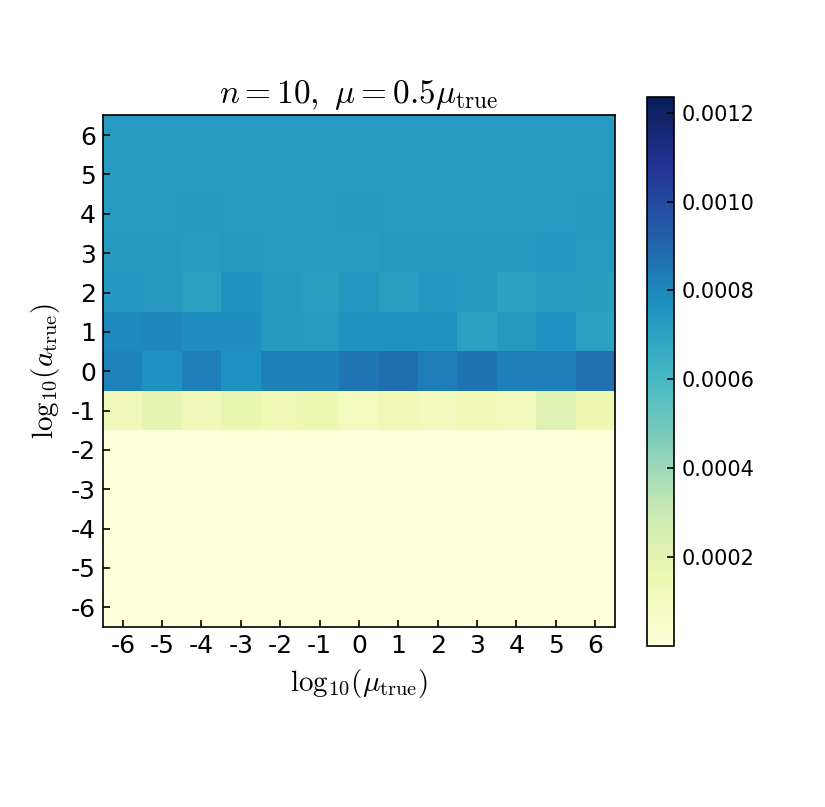}
  \includegraphics[trim=.3cm 0 1cm 0, clip, width=0.325\textwidth]{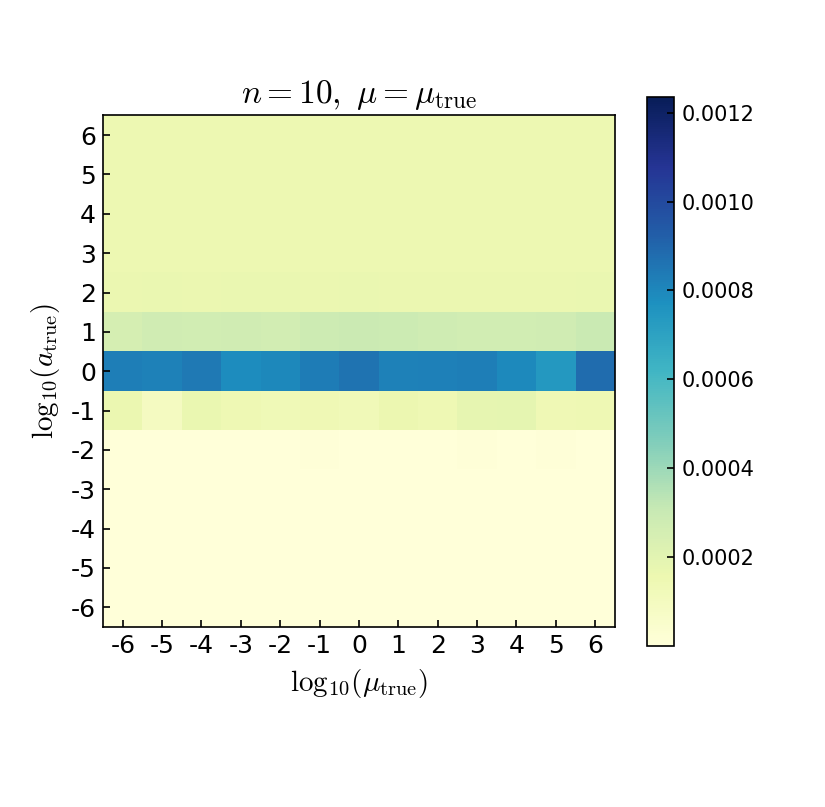}
  \includegraphics[trim=.3cm 0 1cm 0, clip, width=0.325\textwidth]{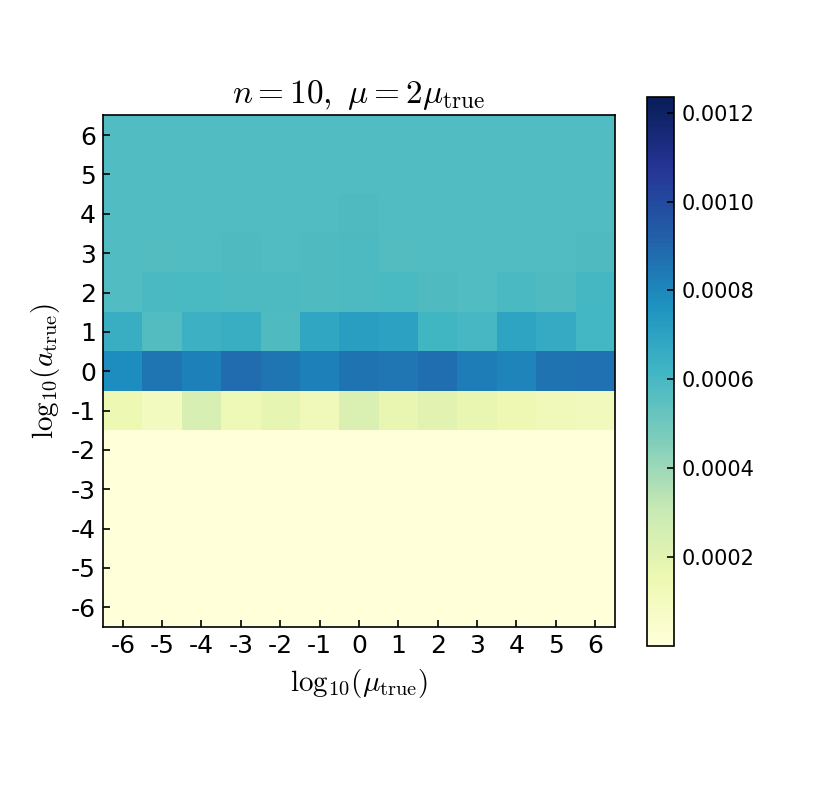}
  \includegraphics[trim=.3cm 0 1cm 0, clip, width=0.325\textwidth]{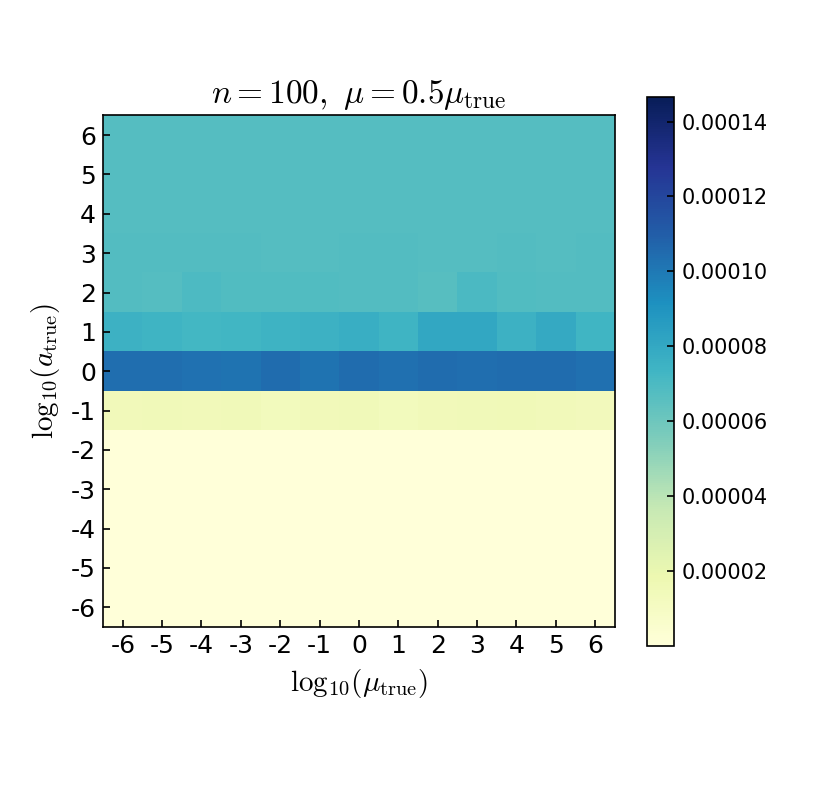}
  \includegraphics[trim=.3cm 0 1cm 0, clip, width=0.325\textwidth]{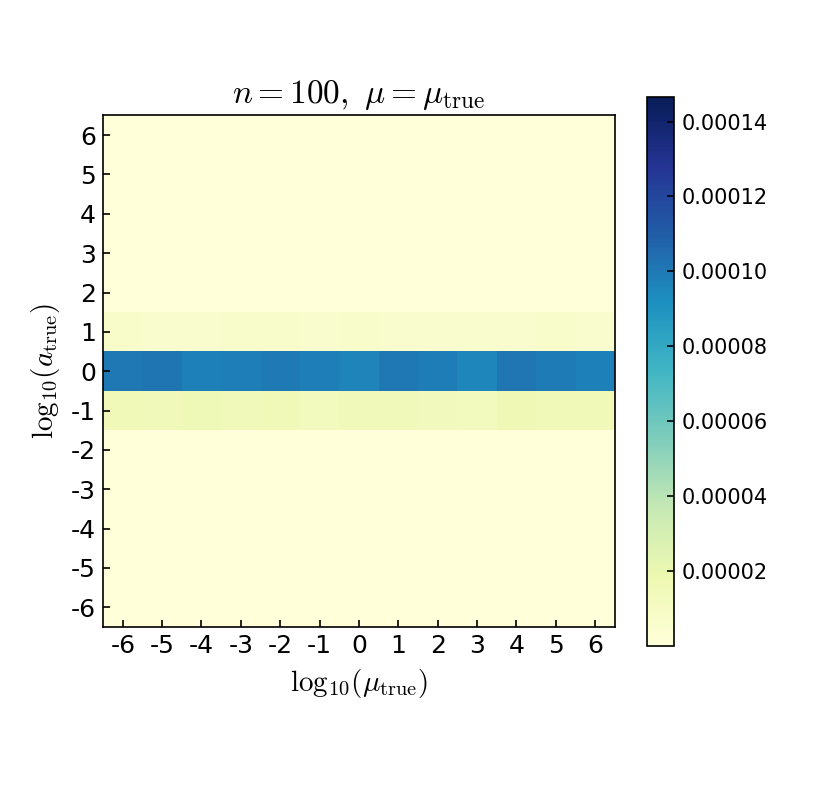}
  \includegraphics[trim=.3cm 0 1cm 0, clip, width=0.325\textwidth]{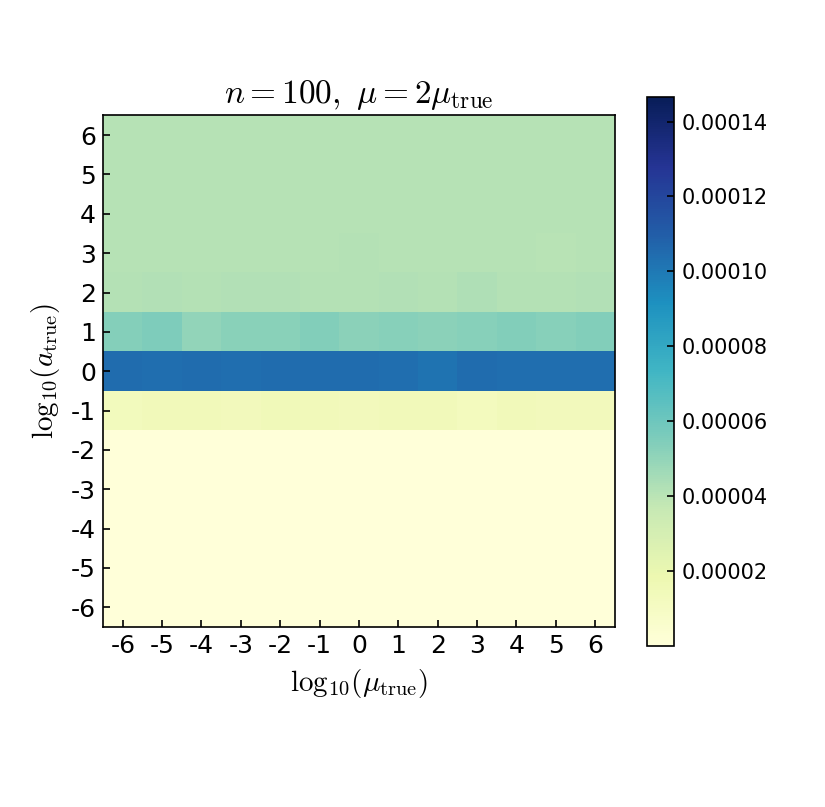}
  \caption{Kullback--Leibler divergence $d_\mathrm{KL}(f,g)$ between the true full conditional $f$ and the approximate full conditional $g$, for every case with $a_0 = 1$. The values shown are the averages over the five independent data sets for each case. Note the scale at the right of each plot.}
  \label{figure:KL2-1}
\end{figure}

\begin{figure}
  \centering
  \includegraphics[trim=.3cm 0 1cm 0, clip, width=0.325\textwidth]{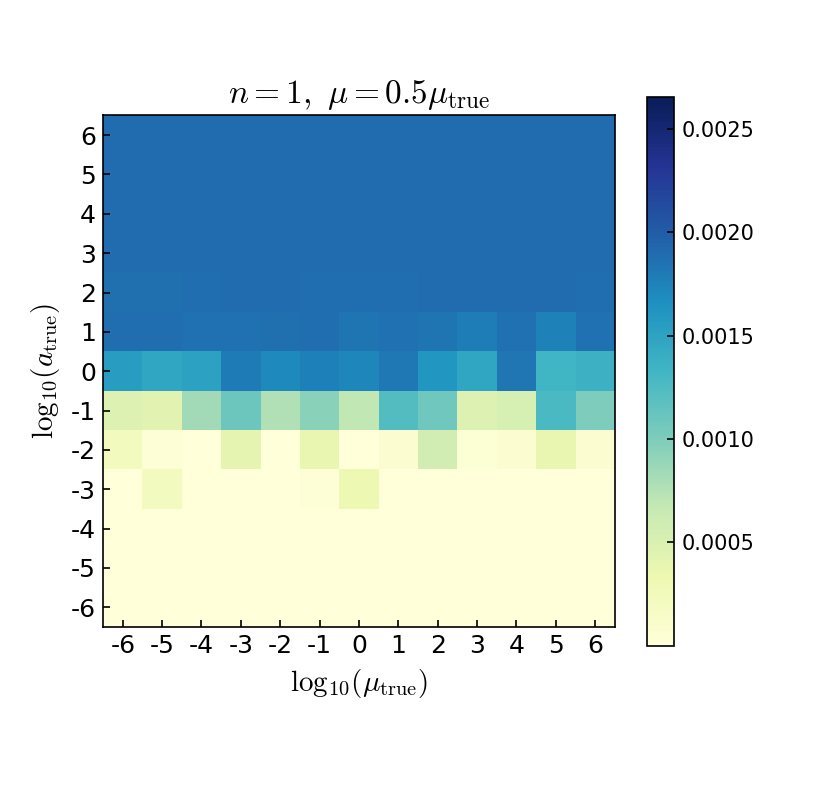}
  \includegraphics[trim=.3cm 0 1cm 0, clip, width=0.325\textwidth]{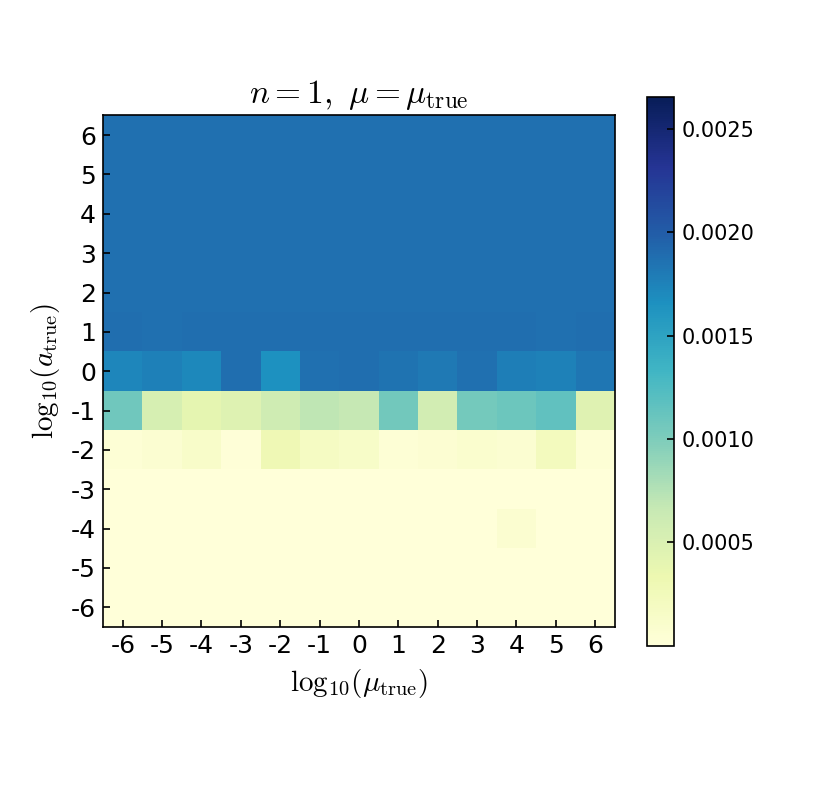}
  \includegraphics[trim=.3cm 0 1cm 0, clip, width=0.325\textwidth]{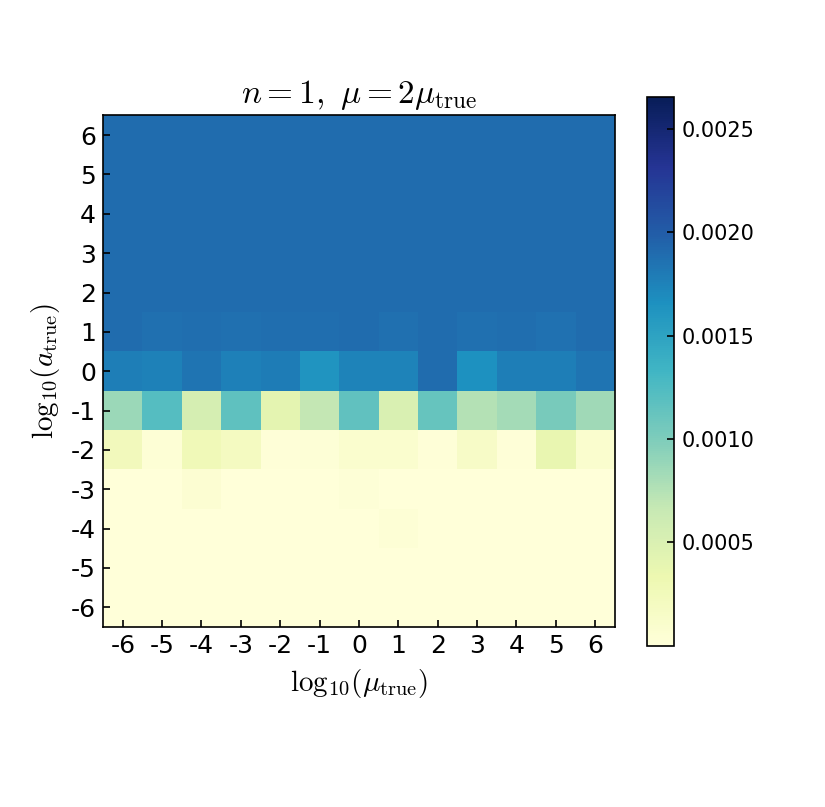}
  \includegraphics[trim=.3cm 0 1cm 0, clip, width=0.325\textwidth]{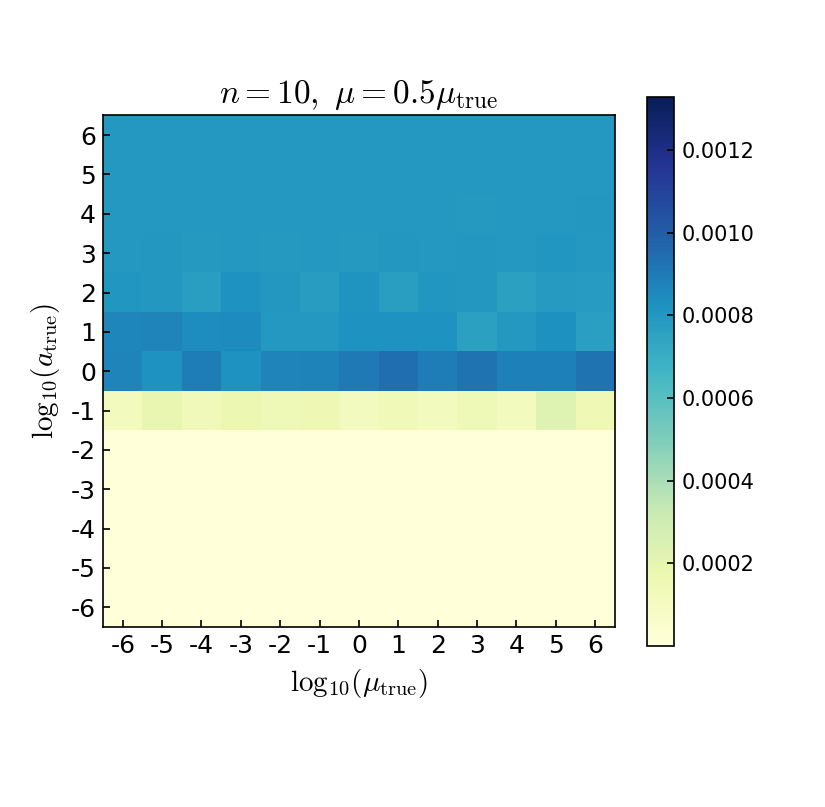}
  \includegraphics[trim=.3cm 0 1cm 0, clip, width=0.325\textwidth]{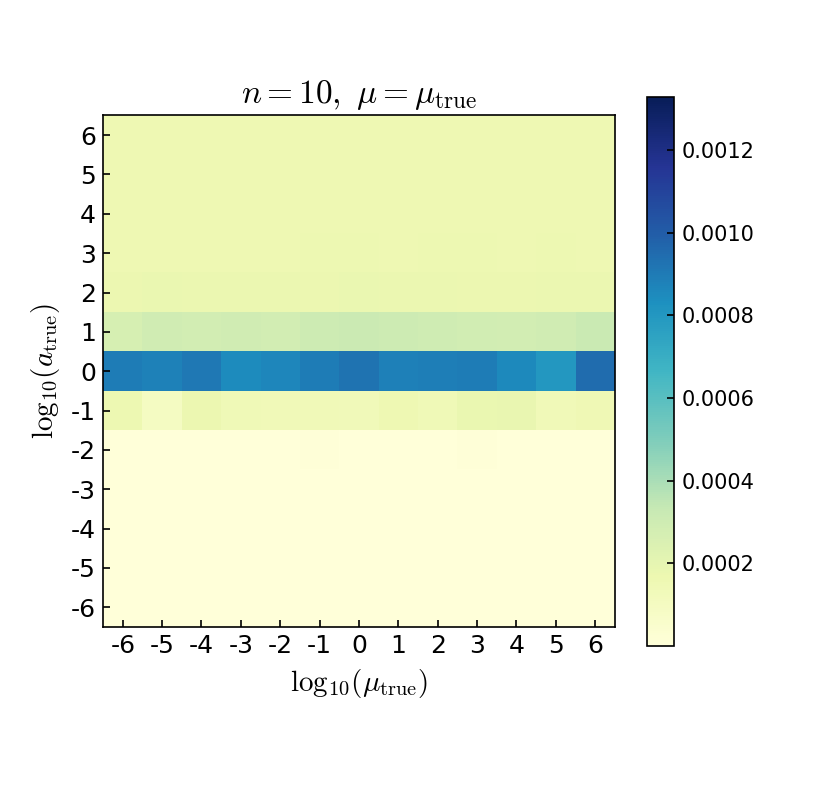}
  \includegraphics[trim=.3cm 0 1cm 0, clip, width=0.325\textwidth]{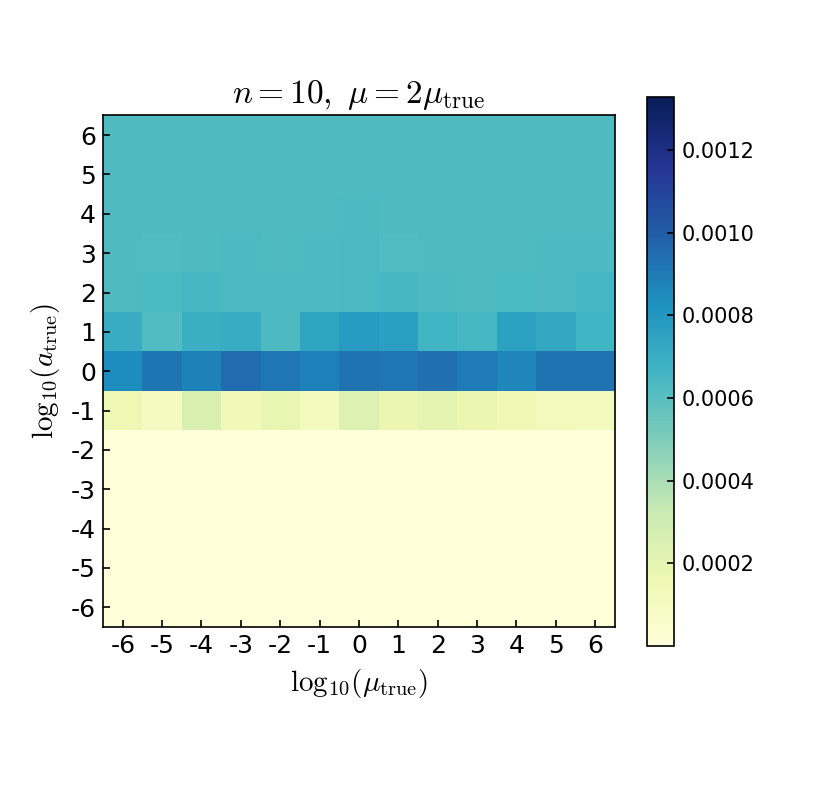}
  \includegraphics[trim=.3cm 0 1cm 0, clip, width=0.325\textwidth]{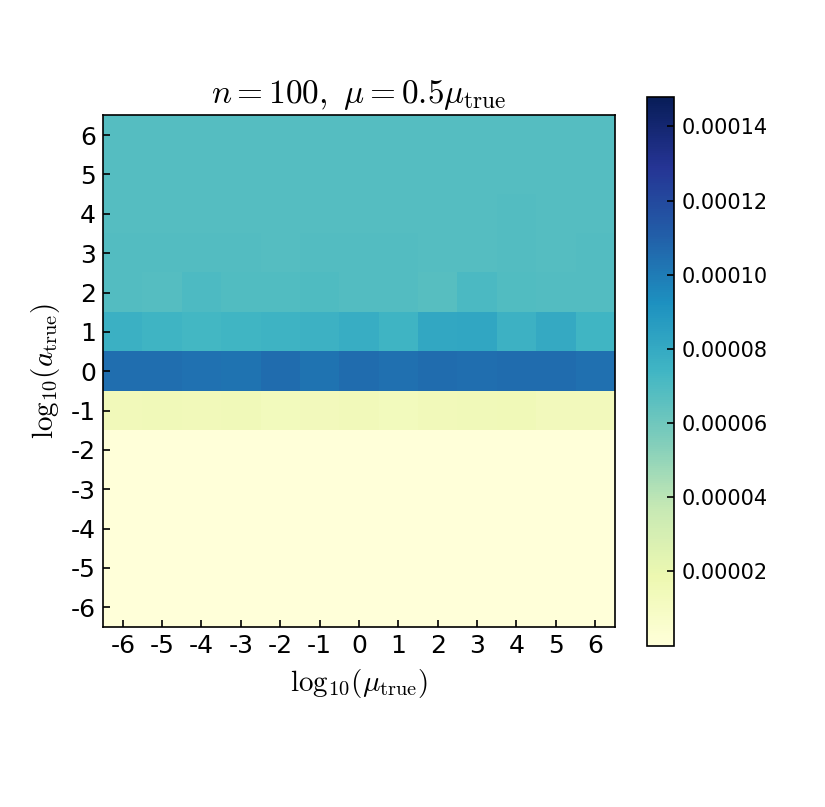}
  \includegraphics[trim=.3cm 0 1cm 0, clip, width=0.325\textwidth]{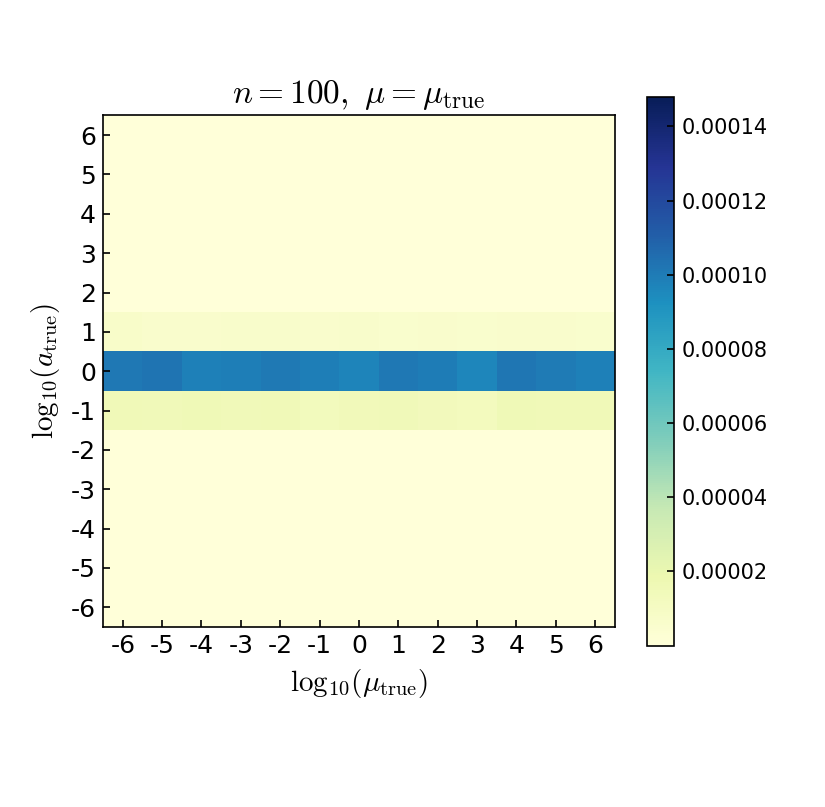}
  \includegraphics[trim=.3cm 0 1cm 0, clip, width=0.325\textwidth]{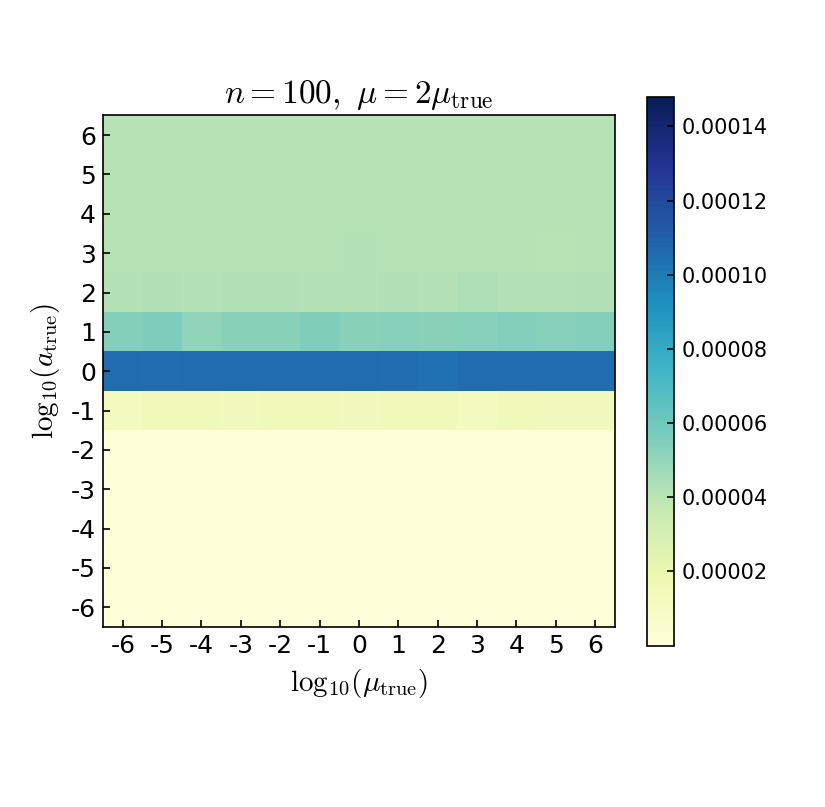}
  \caption{Kullback--Leibler divergence $d_\mathrm{KL}(g,f)$ between the true full conditional $f$ and the approximate full conditional $g$, for every case with $a_0 = 1$. The values shown are the averages over the five independent data sets for each case. Note the scale at the right of each plot.}
  \label{figure:KL1-1}
\end{figure}

\bibliographystyle{abbrvnat}

\end{document}